\newcommand{\eps}{\varepsilon}
\newcommand{\E}{\mathbb{E}}
\renewcommand{\c}{\mathcal}
\renewcommand{\epsilon}{\varepsilon}
\newcommand{\vass}[1]{\left|#1\right|}
\renewcommand{\epsilon}{\varepsilon}
\newtheorem{theorem}{Theorem}[section]
\newtheorem{corollary}[theorem]{Corollary}
\newtheorem{proposition}[theorem]{Proposition}
\theoremstyle{definition}
\newtheorem{definition}[theorem]{Definition}
\newtheorem*{remark*}{Remark}
\title{The Bounds of Algorithmic Collusion\\
\vspace{.1in}
\large \centering $Q$-learning, Gradient Learning, and the Folk Theorem
\footnotetext{The authors are grateful to Panayotis Mertikopoulos, Christopher Sandmann, Ronny Razin, and Yannick Viossat for the useful discussions and insights they provided.}}
\author{Galit Askenazi-Golan\thanks{Department of Mathematics, London School of Economics. Email: \texttt{g.askenazi-golan, d.mergoni, e.plumb @lse.ac.uk}},\hspace{0.2cm} Domenico Mergoni Cecchelli\footnotemark[1],\hspace{0.2cm} Edward Plumb\footnotemark[1],\\ Clemens Possnig\thanks{Department of Economics, University of Waterloo. Email: \texttt{c.possnig (@) uwaterloo.ca}}}
\date{}
\begin{document}

\maketitle
\begin{abstract}
We explore the behaviour emerging from learning agents repeatedly interacting strategically for a wide range of learning dynamics, including $Q$-learning,  projected gradient, replicator and log-barrier dynamics. 
Going beyond the better understood classes of potential games and zero-sum games, we consider the setting of a general repeated game with finite recall under different forms of monitoring.
We obtain a Folk Theorem-style result and characterise the set of payoff vectors that can be obtained by these dynamics, discovering a wide range of possibilities for the emergence of algorithmic collusion. Achieving this requires a novel technical approach, which, to the best of our knowledge, yields the first convergence result for multi-agent $Q$-learning algorithms in repeated games.
\end{abstract}

\section{Introduction}

Increasingly, strategic decisions are delegated to artificial intelligence (AI) agents. These agents operate in a possibly unknown environment and adapt their strategies over time while interacting with each other. As AI agents become more prevalent, it becomes necessary to understand the possible economic outcomes emerging from their interaction. One such outcome of concern is algorithmic collusion: the ability of learning agents to tacitly coordinate on high prices via reward–punishment schemes. Empirical evidence suggests that this outcome is possible across a variety of settings \cite{calvano2020artificial, assad2020algorithmic}.  

Our contribution is to provide a theoretical understanding of the set of possible outcomes emerging from repeated interactions among learning agents. We associate this set with the Folk Theorem in repeated games, showing that a rich set of outcomes can arise under a broad class of learning schemes. In particular, we focus on reinforcement learning (RL), which refers to updating rules that guide an agent's policy\footnote{We use `policy' and `strategy' interchangeably, as well as `agent' and `player'. In both cases, game theorists more commonly use the latter, while computer scientists lean toward the former.} towards actions believed to yield higher payoffs. The RL methods we study are $Q$-learning and the $q$-replicator family, where the latter encompasses gradient learning, replicator, and log-barrier dynamics as special cases.

$Q$-learning is a classical method designed to identify optimal policies in Markov decision processes (MDPs). It iteratively updates a $Q$-table, which estimates the expected future payoff after taking action $a$ in state $s$. Convergence of the $Q$-table corresponds to identifying the optimal policy in the MDP.

The $q$-replicator dynamics generalise gradient learning. Here, a policy mapping states to action probabilities is updated over time using the estimated rewards. The policy evolves in proportion to the relative profitability of actions, converging when no further improvement is possible, i.e., a local maximum is reached.

Both methods are successful and well-understood in MDPs, which explains their popularity in multi-agent settings. However, multi-agent learning introduces additional challenges: the environment is non-stationary due to the simultaneous adaptation of other agents' strategies, and the value of any strategy depends on the strategies of all other agents. The dynamics in the policy space induced by this interaction may be chaotic, cyclical, and nonconvergent.

Our framework accommodates both perfect and imperfect monitoring. The characterisation of learnable strategy profiles\footnote{That is, profiles that can emerge as the limiting behaviour of the RL process.} provides new insights into solution concepts under different monitoring assumptions. Specifically, perfect public equilibria are attainable via RL, whereas sequential equilibria are not directly comparable to the set of learnable strategy profiles.

In repeated games, the analogue of a state in an MDP is the history of play. Therefore, applying RL to repeated games naturally entails representing finite histories as state variables. As machines are limited in their representation capabilities, only finite-valued state variables can be used in RL. Therefore, in this paper, we consider finite, recent histories as state variables both for $Q$-learning and $q$-replicator approaches, in other words, our agents learn to play finite-recall strategies.\footnote{Finite memory strategies are an alternative approach that is beyond the scope of this paper.} 

The RL agents are not assumed to have complete knowledge of game payoffs. Our procedures are estimation-based, so all results are probabilistic in nature. Under perfect monitoring, we show that strict subgame-perfect Nash equilibria with finite recall can be learnt with arbitrarily high probability. We prove that the set of learnable outcomes corresponds closely to the Folk Theorem for finite recall \cite{barlo2016bounded}. Specifically, all outcomes identified in \cite{barlo2016bounded} that can also be achieved by strict subgame-perfect Nash equilibria are learnable by the RL agents.\footnote{The main difference is due to Theorem 2 of \cite{barlo2016bounded}, which requires mixed minmax strategies. Such strategies cannot be strict subgame-perfect. We provide a version using pure minmax strategies.}

Beyond the updating rules, $Q$-learning and $q$-replicator differ in their temporal frameworks. $Q$-learning is typically a `continuing' learning process: agents interact repeatedly while updating their $Q$-tables ad infinitum. In contrast, $q$-replicator is usually `episodic': one realisation of the repeated interaction is an episode, and episodes are repeated infinitely.  Practically, this is achieved by taking the discount factor to reflect the probability of terminating an episode. This distinction generates a difference in what is an appropriate convergence criterion. For $q$-replicator, convergence to an equilibrium trivially implies convergence of accumulated rewards to the equilibrium payoff. For $Q$-learning however, convergence of the $Q$-table to one associated with an equilibrium does not necessarily guarantee convergence of realised payoffs during learning to the equilibrium payoff.

\subsection{Relation to the Literature}

To the best of our knowledge, this work presents the first Folk Theorem for learning in general finite-player, finite-action games, thereby extending the literature beyond potential games and zero-sum games (\cite{daskalakis2020independent}, \cite{mokhtari2020unified}, \cite{perolat2021poincare}, \cite{fox2022independent}, \cite{mguni2021learning} and \cite{cartea2025algorithmic}). This broader framework allows for a general perspective on what agents might learn, encompassing collusion, competition or other behaviours. 

Other theoretical works in the literature on algorithmic collusion focus on the emergence of cooperation among memoryless algorithms (\cite{banchio2022artificial}), the interplay of monitoring and stage games under continuous action settings (\cite{possnig2023reinforcement}), or consider more stylised models of algorithmic competition (\cite{lamba2022pricing}, \cite{brown2021competition}, \cite{johnson2020platform}, and \cite{salcedo2015pricing}). 

There are two main technical contributions. First, we establish the first general convergence result for multi-agent $Q$-learning processes converging to strict Nash equilibria (Theorem \ref{thm: perfmon_Qlearning}). Previous theoretical results either restricted attention to static interactions, where agents possess no memory (\cite{leslie2005individual}), or relied on substantially simplifying adjustments to the learning rules, such as coordinated exploration phases across agents (\cite{arslan2016decentralized}). 

Second, we develop a novel approach to studying convergence to Nash equilibria in repeated games. This analysis must account for the difficulties posed by pure equilibria, where certain histories may never be realised, causing standard ergodicity conditions to fail. Addressing this issue requires a careful generalisation of Theorem~2 in \cite{giannou2021convergence}.

Finally, our results extend existing convergence guarantees in stochastic games from gradient learning (\cite{giannou2021convergence}) to the broader class of $q$-replicator dynamics, as outlined above.

\section{The Game Model}\label{sec2_game_model}

We begin by introducing the model. A stage game (or one-shot game) is an ordered triple $G = (N,A,(R_i)_{i \in N})$ where $N$ is the finite set of players, $A=\prod_{i\in N}A_i$ where $A_i$ is the finite set of actions available to player~$i \in N$, and $R_i:A \to \mathbb{R}$ is the reward function of player~$i \in N$.

For a given stage game $G$, a round of $G$ consists of an action profile $a=(a_1,\dots{},a_{\vass{N}})\in A$ and its corresponding reward vector $r=(r_1,\dots{r_{\vass{N}}})=(R_1(a), \dots{}, R_{\vass{N}}(a))$.

For any finite set $B$, let $\Delta(B)$ denote the set of probability distributions over $B$. For a fixed game $G$, we define for each player $i\in N$ their utility function $u_i : \prod_{j \in N} \Delta(A_j) \rightarrow \mathbb{R} $ by taking expected rewards under mixed action profiles.  Specifically, given a strategy profile $\pi \in \prod_{j \in N} \Delta(A_j)$, where each player $j$ independently draws an action according to $\pi_j$, player $i$'s utility is $u_i(\pi) = \mathbb{E}_{a \sim \pi}[R_i(a)]$. 

\subsection{The Repeated Game}



We study an infinitely repeated game with discounting and imperfect monitoring. In each stage, after an action profile is played, players observe private signals whose joint distribution depends on the realised action profile and may exhibit arbitrary correlation. The associated signal spaces and conditional signal distributions define the game’s monitoring structure, that is, the information structure governing what players observe about past play. Our formulation encompasses public monitoring and perfect monitoring as special cases. At each stage, players choose actions based on their private histories, consisting of their own past actions and observed signals, from which they can infer their realised payoffs. Finally, we impose a finite-recall assumption on players' behaviour, motivated by reinforcement-learning considerations.

Formally, fix a stage game $G$. Let $Z=\prod_{i\in N}Z_i$ where $Z_i$ is a finite set of possible private signal realisations for player $i \in N$. Let $q:A\to \Delta(Z)$ be the joint distribution over signal realisations conditional on the action profile played, and let $\delta\in (0,1)$ be the discount factor.

The repeated game $\Gamma(\delta)$  with stage-game $G$ is an infinite-horizon stochastic process defined as follows.
At the beginning of period $t+1$, the current history is given by $h^t=(a^1, z^1, ...,a^t,z^t)$, where $a^k \in A$ and $z^k \in Z$ for each $k \leq t$. For $t=0$, the history is empty and is denoted by $h^0 = \varnothing$.  The set of all possible histories of the game up to and including period $t$ is $H^t=(A\times Z )^t$, with $H^0 = \{\varnothing\}$. For each player $i \in N$, the $i^{th}$ component of $h^t$ is denoted by $h^t_i$ and is called the private history of player $i$; similarly, $H^t_i = (A_i \times Z_i )^t$ is the set of all possible private histories of player $i$ up to and including period $t$, with $H_i^0 = \{\varnothing\}$\footnote{In the case of perfect monitoring, one may equivalently take $H^t = A^t$, which implies $H_i^t = A^t$ for all $i \in N$.}.

During period $t+1$, an action profile, a signal profile, and the corresponding rewards are realised.  Let $a^{t+1}=(a_1^{t+1},a_2^{t+1},...,{a_{|N|}}^{t+1}) \in A$ denote the action profile at period $t+1$, where we specify below how the action $a_i^{t+1}$ is chosen by player $i$. Conditional on $a^{t+1}$, a signal profile $z^{t+1}=(z_1^{t+1},z_2^{t+1},...,{z_{|N|}}^{t+1})$ is drawn according to $q(a^{t+1})$. Finally, each player $i$ receives a realised reward $r_i^{t+1} = R_i(a^{t+1})$ and we write $r^{t+1} = (r_1^{t+1}, \dots, r_{|N|}^{t+1})$ for the resulting reward vector.


We use the terms strategy (as in the game-theoretic literature) and policy (as in the reinforcement learning literature) interchangeably. In standard models of repeated games, players typically condition their strategies on histories of unbounded length. Such an assumption is at odds with computability and memory constraints that are a fundamental part of a reinforcement learning analysis. We therefore restrict attention to finite-recall strategies for each player.



Fix a positive integer $\ell$. For each $t \ge 0$, define the $\ell$-recall private history of player $i$ at the beginning of period $t+1$ by
\begin{align*}
\hat h_i^{\ell} \;=\;
\begin{cases}
\bigl(a_i^{t-\ell+1}, z_i^{t-\ell+1}, \dots, a_i^{t}, z_i^{t}\bigr), & \text{if } t \ge \ell,\\[0.5em]
\bigl(a_i^{1}, z_i^{1}, \dots, a_i^{t}, z_i^{t}\bigr), & \text{if } 0 < t < \ell,\\[0.5em]
\varnothing, & \text{if } t = 0,
\end{cases}
\end{align*}
where the dependence on $t$ is omitted for notational simplicity. The set of all possible $\ell$-recall private histories of player $i$ is $\hat H_i^\ell = \bigcup_{k=0}^{\ell} (A_i \times Z_i)^k$, and the set of all $\ell$-recall histories\footnote{Our convergence results, in fact, allow for different players to have different lengths of recall.} is $\hat H^\ell = \prod_{i \in N} \hat H_i^\ell$. Finally, let $H^\infty=\bigcup_{t\in \mathbb{N}}(A\times Z)^t$ be the set of all possible (finite) histories, and $h^\infty=(a^1, z^1,a^2,z^2,...)\in H^\infty$ be a history, a play of the game.

Each player~$i$ uses a strategy that conditions actions on $\hat{h}_i^{\ell}$, the private history of the most recent $\ell$ periods. The set of mixed $\ell$-recall strategies of player $i$ is denoted by $\Pi_i^{\ell} = \Delta \left (A_i^{\hat{H}^{\ell}} \right )$. For any $\pi_i \in \Pi_i^{\ell}$, let $\rho_i(\pi_i):\hat H_i^{\ell}\to \Delta(A_i)$ denote the induced behavioural strategy\footnote{A strategy that assigns a distribution over the action set available at each state.}. We write $\rho(\pi)$ for the corresponding profile of behavioural strategies.

A strategy profile $\pi \in \Pi^{\ell}  \coloneqq \prod_{i \in N} \Pi_i^{\ell}$ induces a probability distribution over the set $H^{\infty}$ of realised histories. The expected discounted utility of player $i \in N$ under $\pi$ is given by

$$
V_i(\pi):=\E_{h^{\infty}\sim \pi}\left[\sum_{t=1}^{\infty}\delta^t R_i(a^{t})\right].
$$

\section{Solution Concepts}\label{sec sol}

We now introduce the solution concepts for the models presented in the previous section.

Our model allows for a wide range of monitoring structures. In the game theory literature, specific solution concepts are defined for certain settings of monitoring, recall, and admissible strategy types. A detailed discussion of how the solution concepts introduced here relate to standard equilibrium notions, such as sequential equilibrium and perfect public equilibrium, can be found in Section~\ref{sec: Impfmon}.

\subsection{Equilibrium in repeated games}

Given a repeated game with stage game $G$ and $\ell\in \mathbb{N}$, a strategy profile $\pi^*\in \Pi^{\ell} $ is an $\ell$-recall equilibrium if no player $i$ has a profitable unilateral deviation to any $\ell$-recall strategy.

\begin{definition}[$\ell$-recall equilibrium]
A strategy profile $\pi^*\in \Pi^{\ell}$ is an $\ell$-recall equilibrium if for any player $i\in N$ and any $\ell$-recall strategy  $\pi_i\in\Pi^{\ell}_i$, we have $V_{i}(\pi^*) \geq V_{i}(\pi_i,\pi_{-i}^*)$.
\end{definition}

We next introduce the subgame-perfect refinement of $\ell$-recall equilibrium. For any profile $\pi \in \Pi^{\ell}$ and $\hat h^{\ell}\in \hat H^{\ell}$, define $\rho(\pi \mid \hat h^{\ell})$ as the behavioural strategy profile of the subgame starting at history $\hat h^{\ell}$. Also define $h^{\infty}(\hat h^{\ell})$ as any infinite history that can follow $\hat h^{\ell}$. Then the expected utility of player $i$ conditional on having observed history $\hat h^{\ell}$ is
$$
V_i(\hat h^{\ell}, \pi) = \E_{h^{\infty}(\hat h^{\ell})\sim \rho(\pi \mid \hat h^{\ell})}\left[\sum_{t=0}^{\infty}\delta^t R_i(a^{(t)}) \right]
$$
\begin{definition}[$\ell$-recall subgame-perfect equilibrium]
A strategy profile $\pi^*\in \Pi^{\ell}$ is an $\ell$-recall  subgame-perfect equilibrium if, for every player $i \in N$, every $\ell$-recall history $\hat h_i^{\ell} \in \hat H_i^{\ell}$, and every $\ell$-recall strategy $\pi_i \in \Pi_i^{\ell}$, we have $V_{i}(\hat h^{\ell},\pi^*) \geq V_{i}(\hat h^{\ell},\pi_i,\pi_{-i}^*)$.
\end{definition} 
An $\ell$-recall subgame-perfect equilibrium is \emph{strict} if any unilateral deviation induces a strict loss.

The study of repeated games often focuses on the correspondence between Nash equilibria and the payoff vectors they generate. One aspect of this relationship is given by the celebrated Folk Theorem. Two important assumptions of the Folk Theorem are perfect monitoring,  and that the players have unbounded recall. Under these assumptions, the Folk Theorem characterises the set of payoff vectors that correspond to Nash equilibria as the set of feasible and individually rational payoffs, to be defined shortly. A complete discussion of how the results here extend to the imperfect monitoring case can be found in Section \ref{sec: Impfmon}.

 \cite{barlo2016bounded} establish that the set of payoff vectors under perfect monitoring and unbounded recall can be approximated by equilibrium payoff vectors when players are restricted to have finite recall. We invoke their Theorem 1, which considers any finite number of players and restricts the players to using pure minmax strategies.\footnote{As will become clear later on, Q-learners generically cannot learn to play mixed minmax strategies.}

Player $i$'s minmax value in pure strategies is
$$\tilde{u}_i := \min_{a_{-i} \in A_{-i}} \max_{a_i \in A_i} u_i(a_i, a_{-i}).$$

We denote the set of feasible and strictly individually rational payoffs as $\tilde{W}:=\{u\in\text{conv}\{u(a):a\in A\}: u_i> \tilde{u}_i\}$, where $\text{conv}\{u(a):a\in A\}$ is the convex hull of the set $\{u(a):a\in A\}$. 

Fix a stage game $G= (N, A, (R_i)_{i \in N})$, with its associated set of feasible and strictly individually rational payoffs $\tilde W$, and let $\Gamma(\delta)$ be the associated $\delta$-discounted repeated game with stage game $G$ and perfect monitoring.

We now state a variation of Theorem 1 in \cite{barlo2016bounded}. All proofs can be found in Appendix \ref{sec: proofs}. 

\begin{restatable}{theorem}{thmpayoffapprox}\label{thm: payoff_approx}
 For all $\epsilon>0$ there is $\delta^*\in (0,1)$ such that for all $\delta\in (\delta^*, 1)$ and every $u\in \tilde{W}$, there exists $\ell \in \mathbb{N}$ and an $\ell$-recall strict subgame-perfect equilibrium $\pi^*$ of $\Gamma(\delta)$ such that the distance between $u$ and  $V(\pi^*)$ is at most $\epsilon$.
\end{restatable}

Motivated by this result, we make the following definition:

\begin{definition}[$\varepsilon$-finite implementation]
Let $u \in \tilde W$ and $\varepsilon > 0$. We say that an $\ell$-recall strict subgame-perfect equilibrium $\pi^*$ of $\Gamma(\delta)$ is an \emph{$\varepsilon$-finite implementation} of $u$ if $\delta$ is sufficiently close to 1 and $\ell \in \mathbb{N}$ are such that the distance between $u$ and  $V(\pi^*)$ is at most $\epsilon$.
\end{definition}

Next, we divide our results into the two main classes of learning processes we consider: $Q$-learning, and $q$-replicator dynamics. 

\section{$Q$-learning: Dynamics and Results}

$Q$-learning is a reinforcement learning method that estimates the value of taking a given action in a given state and thereafter behaving greedily. These estimates are stored in a table, known as the $Q$-table. Each entry $Q(s,a)$ represents the agent’s current estimate of the expected discounted payoff obtained by choosing action $a$ in state $s$ and subsequently playing optimally.

For the sake of exposition, fix a finite state space $S$ and a finite action set $A$. During the learning process, the $Q$-table is updated only at state-action pairs that are actually realised. Specifically, when the agent is in state $s$, chooses action $a$, receives reward $r_t$, and transitions to a new state $s'$, the update rule is
\begin{align*}
Q_{t+1}(s,a)
&= Q_t(s,a)
 + \gamma_t\Bigl(
 r_t
 + \delta \max_{a' \in A} Q_t(s',a')
 - Q_t(s,a)
 \Bigr),
\end{align*}
where $\delta \in (0,1)$ is a discount factor and $\gamma_t>0$ is a stepsize sequence that decays over time.

In a standard version of $Q$-learning, players select actions in an $\eps$-greedy manner, where $\eps \in(0,1)$ is an exploration parameter: every period, with probability $\eps$, an action is sampled uniformly from $A$; otherwise, the action that maximises the current estimate $Q_t$ at the current state is played. In standard single player MDP settings, where the state variable satisfies an ergodicity property, it has been shown (\cite{watkins1992q}) that $Q_t$ will converge, in probability, to the solution of a Bellman equation. This solution characterises the optimal strategy in the MDP. 

Due to the success and simplicity of this method, it has also been commonly studied in multi-agent settings. Here, each agent $i$ follows their own table $Q_i$ independently. Clearly, the standard MDP assumptions, such as stationarity, fail. Despite this lack of stationarity, it is still possible to establish local convergence results. In particular, we believe ourselves to be the first to show that if the joint $Q$-learning dynamics enter a sufficiently small neighbourhood of a strict Nash equilibrium, then the learning process converges with high probability to $Q$-values that support the equilibrium strategy. 

\subsection{Main Results: $Q$-learning}

Fix a recall length $\ell \in \mathbb{N}$. When $\pi^*$ is an $\ell$-recall strict subgame-perfect Nash equilibrium of $\Gamma(\delta)$, we let $Q^*$ denote the corresponding $Q$-table of the players. We fix a stepsize sequence $(\gamma_t)_{t \geq 1}$ satisfying the usual stochastic approximation assumptions; in particular, we take $\gamma_t = \frac{\gamma}{t}$ for some $\gamma \in (0,1)$ where $t$ counts the number of visits to a given state-action pair\footnote{See Appendix \ref{Appendix_proofs_thm_Qlearning} for details}.

Given a $Q$-table $Q$, we write $\pi(Q)$ for the greedy policy induced by $Q$: after any history of $\Gamma(\delta)$, $\pi(Q)$ selects an action profile that maximises the corresponding $Q$-values.

\begin{restatable}{theorem}{thmPerfMonQLearning}\label{thm: perfmon_Qlearning}
Take $\eps>0$. For every $u\in \tilde{W}$, pick an $\frac{\eps}{2}$-finite implementation $\pi^*$. Denote by $Q^*$ the $Q$-table of $\pi^*$.
\begin{enumerate}
    \item For every $\eta>0$ there is a neighbourhood $\c U$ of $Q^*$ such that for any $Q_0\in \c U$ and any small enough $\gamma>0$ we have, 
$$
\mathbb{P}\left(\forall t>0,\  \pi(Q_t)=\pi(Q^*)\right )\geq 1-\eta.
$$
\item With probability at least $1-\eta$, the vector of expected discounted payoffs accrued by the $Q$-learning dynamics is $\epsilon$-close to $u$.
\end{enumerate}
\end{restatable}

Hence, Theorem~\ref{thm: perfmon_Qlearning} establishes that independent $Q$-learning dynamics can, with high probability, converge to strategy profiles that implement any payoff in $\tilde{W}$ up to an arbitrary $\epsilon$. In other words, the set of outcomes achievable through these dynamics closely approximates the full set of payoffs supported by the Folk Theorem.

\section{$q$-Replicator: Dynamics and Results}

The $q$-replicator dynamic generalises the widely used gradient ascent procedure by introducing what we call the `$q$-gradient'. Here, the strategy of player $i$ is modified in the direction of the $q$-gradient of the expected reward of player $i$. The case where $q=0$ corresponds to the standard gradient and gives rise to a gradient ascent procedure. This generates a process in which every player modifies their strategy in a direction that improves their expected payoff, given fixed opponents' strategies.

\begin{definition}[$q$-gradient]\label{def:q-gradient}
Let $q\in \mathbb{R}_{\geq 0}$, and fix a player $i$ and a mixed strategy profile $\pi$. The $q$-gradient for player $i$ at $\pi$ is the vector $v_{q,i}$ with components $v_{q,i,\alpha}$ such that
$$
v_{q,i,\alpha}(\pi_i, \pi_{-i}) = (\pi_{i,\alpha})^q \left ( V_{i}(e_{\alpha}, \pi_{-i}) - \frac{ \sum_{\beta}(\pi_{i,\beta})^{q} V_{i}(e_{\beta}, \pi_{-i})}{\sum_{\beta}(\pi_{i,\beta})^q} \right ),
$$
where $e_\alpha$ is the pure strategy associated with the $\alpha$-th component of $\pi_i$.
\end{definition}

Note that the term in the parentheses is the surplus utility that action $\alpha$ at state $s$ obtains over the weighted average of the other actions. Furthermore, note that for the case $q=0$ we retrieve the standard definition of gradient minus a standard normalisation term.

 As we are interested in the learning setting where expected payoffs are unknown\footnote{This includes the case where payoffs are unknown and the case where payoffs are known, but not the other agent's strategies.}, we consider the situation where each agent $i$ at period $t$ observes an estimate $\hat v_{q,i}(\pi_t)$ of $v_{q,i}(\pi_t)$. Fix a stepsize sequence $(\gamma_t)_{t \geq 1}$. Then, the $q$-replicator dynamics unfold as follows:
 \begin{align}\label{eq: q-replicator}
     \pi_{i,t+1}=\text{proj}_{\Pi_{i}^{\ell}}(\pi_{i,t} + \gamma_{t} \hat{v}_{q,i}(\pi_{i,t}))
 \end{align}
where $\text{proj}_{\Pi_{i}^{\ell}}$ is the Euclidean projection to agent $i$'s policy space.

When $q=1$ and $q=2$, replacing the estimator with $v_{q,i}(\pi)$, these dynamics reduce to discrete-time variants of replicator and log-barrier dynamics, respectively (\cite{riemannain}). Additionally, in continuous time, these dynamics meet the criteria for myopic adjustment dynamics for \( q \geq 1 \) as defined by \cite{swinkels1993adjustment}, and satisfy positive correlation for \( q \geq 0 \) (\cite{sandholm2010population, riemannain}).

\subsection{Main Results: $q$-replicator}

The behaviour of the $q$-replicator process depends only on the initialisation $\pi_0$, the stepsize schedule $(\gamma_t)_{t\geq 1}$, and the properties of the estimators $\hat v_{q,i,t}$. The sufficient conditions we impose on the stepsizes and on the properties of $\hat v_{q,i,t}$ are natural and are detailed in Appendix~\ref{Appendix_proofs_thm_qRep}. We also outline a popular algorithm `REINFORCE' (see Appendix \ref{sec: reinforce}), which has been shown to satisfy these assumptions (\cite{giannou2021convergence}).

For the moment, it suffices to state that the stepsizes $\gamma_t$ are fully determined by a parameter $\gamma$ and form a decreasing sequence. While our main results are stated for a common stepsize sequence across all agents, they also extend to more general, agent-dependent stepsize schedules, as explained in the appendix.

The $q$-replicator dynamic updates strategies in a direction of estimated reward improvement. Since the $\eps$-implementation strategy profiles are pure, some histories occur with zero probability. 

Two strategies that differ only in their actions following such zero-probability histories will yield identical realised rewards. Consequently, if the $q$-replicator dynamic reaches a strategy profile that deviates from an $\eps$-implementation solely after zero-probability histories, the dynamic does not move closer to the $\eps$-implementation. This observation motivates the following definition:

\begin{definition}
    Two strategy profiles $\pi$ and $\pi'$ are said to be \emph{equivalent} (denoted $\pi \sim \pi'$) if they share the same set of zero-probability histories and induce the same distribution over actions otherwise. We denote by $\Psi(\pi)$ the equivalence class of $\pi$ under this relation.
\end{definition}

Now to our main result in this section:

\begin{restatable}{theorem}{thmPerfMonqRep}\label{thm: perfmon_qrepl}
 Take $\eps>0$. For every $u\in \tilde{W}$, pick an $\eps$-finite implementation $\pi^*$.
 \begin{enumerate}
     \item There is a neighbourhood $\c U$ of $\pi^*$ such that for every $\eta>0$ and any $\gamma>0$ small enough we have, when $\pi_0\in \c U$,
$$
    \mathbb{P}\left( \pi_t \rightarrow \Psi(\pi^*) \text{ as } t \rightarrow \infty\right )\geq 1-\eta.
$$
\item With probability at least $1-\eta$, the vector of expected discounted payoffs accrued by the $q$-replicator dynamic is $\epsilon$-close to $u$.
 \end{enumerate}
\end{restatable}

This result shows that, under an appropriately chosen recall length and with perfect monitoring, the $q$-replicator dynamic can, with arbitrarily high probability, approximate any individually rational and feasible payoff vector. Hence, the dynamic is capable of learning the full set of outcomes supported by the Folk Theorem.

\section{Imperfect Monitoring}\label{sec: Impfmon}

Here, we relax the assumption of perfect monitoring and consider how our previous results for $Q$-learning and $q$-replicator dynamics extend to this setting. Instead of assuming that players perfectly observe their opponents' actions, we now assume that at the end of each period, each player observes a possibly non-deterministic signal that depends on the realised action profile, but does not reveal the specific actions taken during that period.

Imperfect monitoring can be categorised into two main types: public monitoring and private monitoring. In the case of public monitoring, the signal is publicly observable and identical for all players. Conversely, in private monitoring, each player privately observes an individual signal.

When studying public monitoring, it is common practice to restrict players to conditioning their actions solely on the history of observed public signals, rather than on their privately known actions taken. When strategies are conditioned exclusively on public history, the solution concept typically considered is Perfect Public Equilibria (PPE).\footnote{There always exists such a strategy profile. For further readings see \cite{fudenberg2009folk}}

The analysis of PPE has traditionally utilised dynamic programming methods, as shown in works such as, \cite{abreu1990toward}, \cite{fudenberg2007perfect} and \cite{fudenberg2009folk}. This line of research has highlighted various monitoring conditions necessary to achieve feasible payoffs surpassing either the minmax level or the one-shot Nash equilibrium level through PPE. Our results extend only if equilibria constructed in this manner satisfy strictness and finite recall.

\cite{mailath2002repeated} explore PPE that employ a punishment mechanism following any deviation (grim-trigger). For these equilibria, being strict and reliant on finite recall, it is clear that Theorems \ref{thm: perfmon_Qlearning} and \ref{thm: perfmon_qrepl} hold, as one can just reinterpret what were perfect monitoring states now as public history states. 


When monitoring is public, allowing the players to condition their actions on privately observed own actions (in addition to the publicly observed signals), increases the set of equilibria payoffs that can be obtained (see, \cite{kandori2006efficiency}). However, the common knowledge of the relevant parts of the history is lost. A similar situation arises with private monitoring, where each player observes a private signal. The type of equilibrium that is used in these cases is \emph{sequential equilibrium} (\cite{kreps1982sequential}).
Sequential equilibrium requires tracking an infinite hierarchy of beliefs, when each player updates their belief in a Bayesian manner after each period, given the action played and signal observed. This Bayesian updating accumulates over time, contrasting with finite recall, which requires players to `forget' their observations after a certain number of periods.

Therefore, it is unsurprising that Sequential Equilibria cannot typically be represented by finite machines. Hence, RL schemes will commonly not be able to learn such strategy profiles. Specifically, there exist Sequential Equilibria for which we cannot demonstrate positive probability of convergence, and there are strategy profiles with positive probability of convergence that do not constitute Sequential Equilibria (for a counterexample, see Appendix \ref{sec: sequentialequi}). In this scenario, the set of strategy profiles that we prove can be learnt is the set of strategy profiles $\pi^*$ such that any unilateral deviation of player~$i$ to an alternative strategy within their strategy set induces a strict loss. In other words, each player is playing a best response from the set of strategies they are allowed to use.   


\section{Remarks}

\textbf{On the Folk Theorem.} The wide range of equilibrium payoffs described by the Folk Theorem is sometimes viewed as a drawback, as it diminishes the predictive power of the model in determining the outcome of a game.  However, a different perspective can be taken. The fact that a strategy profile constitutes an equilibrium implies a certain degree of stability.  Thus, the existence of multiple stable strategy profiles suggests that if players are playing a non-Pareto optimal equilibrium, there is a stable strategy profile that could be discovered and adopted with higher payoffs for each of the players.

\textbf{From Finite Recall to Finite Memory.} A careful reading of our proofs reveals that, instead of having players condition their actions upon private histories with finite recall, one can generalise to having players condition their actions upon any private state from a finite set of states. This means that Theorems \ref{thm: perfmon_Qlearning} and \ref{thm: perfmon_qrepl} can be stated in more general terms - with finite memory rather than finite recall. With finite memory it is possible to represent strategies that depend on signals received arbitrarily far in the past, which is not possible with finite recall. The set of strategy profiles thus supported in this regime is substantially different.

\textbf{Consequences for Stochastic Games.} We prove Theorem \ref{thm: perfmon_qrepl} extending the methodology of \cite{giannou2021convergence}. A careful reading of our proof reveals that, when applied to the Stochastic Games framework studied in \cite{giannou2021convergence},  our extension ensures local convergence to strict equilibria for a larger class of dynamics than the projected gradient dynamics that was the sole focus of attention of \cite{giannou2021convergence}.

\section{Directions for future work}\label{sec6_discuss}



\subsection{Behavioural Strategies}

In this work, we have considered each player's strategy to be a probability distribution over their pure strategies. Hence, for player $i \in N$, a strategy has been $\pi_i \in \Delta \left (A_i^{\hat{H}_i^{\ell_i}} \right )$. However, an alternative is to consider behavioural strategies, which are functions that map a player's private history to a distribution over their actions. Explicitly, for player $i \in N$, a behavioural strategy would be of the form $\pi_i : \hat{H}_i^{\ell_i} \to \Delta(A_i)$. In a follow-up work, we will define a version of strict subgame-perfect equilibrium and demonstrate that the dynamics, under these changes, converge locally to such an equilibrium.

\subsection{Coordination of parameters}
Theorems \ref{thm: perfmon_Qlearning} and \ref{thm: perfmon_qrepl} are significant in that they do not require players to have identical stepsizes or, in the case of the $q$-replicator, to use the same estimator for their $q$-gradient. However, further generalisation might be possible. For instance, we could consider a model in which each player may use a different value of $q$. We conjecture that this generalisation would yield similar results to Theorem \ref{thm: perfmon_qrepl}. Furthermore, although we do not assume players use identical stepsizes, we do assume their stepsizes are of the same magnitude. This assumption might also be subject to generalisation in future work.

\subsection{Basins of Attraction and Convergence Rates}

Theorems \ref{thm: perfmon_Qlearning} and \ref{thm: perfmon_qrepl} concern local convergence, which, as previously discussed, is the optimal outcome achievable outside of a limited class of games. However, further research is warranted in this area.  One research direction involves analysing the basin of attraction for locally attracting fixed points and examining the geometrical and topological attributes of these basins. Furthermore, for a game with multiple equilibria, like a repeated game, the relative size of a basin of attraction of an equilibrium can be interpreted as the likelihood of the learning dynamics to converge to this equilibrium. This, in turn, can be considered as a selection mechanism. Understanding the factors that affect the size of the basins of attraction is interesting in this regard as well.  

Another important aspect is studying convergence rates within these basins of attraction. While establishing a bound for the convergence rate is beyond the scope of this paper, existing bounds for similar works (see \cite{giannou2021convergence}) could be insightful, though these results rely on assumptions not directly applicable to repeated games.

\section{Conclusion}

In this paper, we have established that a wide class of reinforcement learning dynamics, including $Q$-learning and $q$-replicator processes, converge locally to strict equilibria in repeated games with finite recall when all the agents follow the same learning method independently. Our results extend the classical Folk Theorem into the realm of algorithmic learning, demonstrating that AI agents can, with high probability, achieve any feasible and strictly individually rational payoff vector. As the families of algorithms we consider encompass substantial variations in updating schemes, our results suggest that convergence to Folk Theorem outcomes is not an artifact of a specific update rule. Rather, the geometry of repeated games with finite recall naturally creates basins of attraction around these strict equilibria, which standard learning algorithms are prone to discover.

Our findings have several implications. First, they provide a rigorous foundation for understanding the conditions under which algorithmic collusion may emerge in strategic settings. Second, they clarify the relationship between classical equilibrium concepts, imperfect monitoring and the behaviour of learning agents, offering a framework to assess which equilibria are practically attainable through reinforcement learning. Finally, our work opens multiple avenues for future research, including the study of convergence rates, basins of attraction, heterogeneous learning parameters, and the extension to behavioural strategies.

Overall, this paper contributes to the understanding of strategic interactions among adaptive agents, demonstrating that the rich set of outcomes predicted by the Folk Theorem is not only theoretically attainable but also practically learnable by AI algorithms.  This includes collusive outcomes in competitive markets: we establish that independent learning agents can converge to supracompetitive prices without communication or explicit coordination, making algorithmic collusion an inherent feature of multi-agent reinforcement learning.

\appendix

\bibliographystyle{apalike.bst}
\bibliography{lit.bib}

\section{Proofs}\label{sec: proofs}

\subsection{Proof of Theorem~\ref{thm: payoff_approx}}

Recall the following theorem that allows us to approximate feasible and strictly individually rational payoffs by strict $\ell$-recall subgame-perfect equilibrium: 

\thmpayoffapprox*

\begin{proof}[Proof of Theorem~\ref{thm: payoff_approx}]
If $\tilde W=\varnothing$, there is nothing to prove. Otherwise, fix an arbitrary payoff vector $u\in\tilde W$ and $\epsilon>0$. Since $\tilde W\neq\varnothing$, the hypotheses of Theorem~1 in \cite{barlo2016bounded} are satisfied, independently of the number of players $|N|$. That result implies that, for $\delta$ sufficiently close to one, there exists a finite recall length $M\in\mathbb{N}$ and an $M$-recall subgame-perfect equilibrium of the $\delta$-discounted repeated game whose induced expected payoff vector lies within $\epsilon$ of $u$.

The equilibrium constructed in \cite{barlo2016bounded} has the familiar Folk Theorem structure: an equilibrium path implementing the target payoff $u$, a punishment phase following unilateral deviations, and a continuation phase thereafter. Although the equilibrium is not strict as stated, it can be made strict by a simple modification. Specifically, extending the punishment phase by one additional period renders any profitable deviation strictly suboptimal, while leaving both the equilibrium path and the induced payoff vector unchanged.\footnote{The only case in which strictness may fail in every subgame is when a punished player has multiple best responses during the punishment phase. By construction, the player may select any of these best responses. This does not affect the argument, since any deviation from a mixture over these best responses yields a strictly lower payoff.}

This modification increases the required recall length by at most $|N|+5$, and hence by a finite amount. Consequently, there exists some $\ell\in\mathbb{N}$ and an $\ell$-recall strict equilibrium $\pi^*$ of $\Gamma(\delta)$ whose induced payoff vector lies within $\epsilon$ of $u$.
\end{proof}

\subsection{Proof of Theorem \ref{thm: perfmon_Qlearning}} \label{Appendix_proofs_thm_Qlearning}

For each player $i \in N$, $Q$-learning is an iteration over a $Q$-matrix $Q_i: H \times A_i \to \mathbb{R}$ representing state-action values. Given some initial $Q_i^0$, at each time $t=1,\dots$, each player $i$ updates their $Q$-matrix $Q_i$ in the following manner:
\begin{align*}
    Q_i^{t+1}(s,a) = Q_i^t(s,a) + \gamma_{n_t(s,a)}\mathbf{1}\left\{s_t=s, a_t=a\right\}\left(r_i(s_t,a_t) + \delta \max_{a'\in A_i}Q_i^t(s_{t+1},a') - Q_i^t(s,a) \right).
\end{align*}
where $n_t(s,a)$ counts the number of visits to $s,a$ up to time $t$ and $\gamma_{n_t(s,a)} = \frac{\gamma}{n_t(s,a)}$ for some $\gamma \in (0,1)$. 

We recall the main convergence result for the $Q$-Learning dynamics.

\thmPerfMonQLearning*

In order to prove this result, we first obtain the following:

\begin{theorem}\label{thm:QLearn_Local}
    Let $\pi^*$ be an $\ell$-recall strict subgame-perfect equilibrium of $\Gamma(\delta)$. Denote by $Q^*$ the $Q$-matrix of $\pi^*$. Then for every $\varepsilon > 0$ and $\eta > 0$, there is a neighbourhood $\mathcal{U}$ of $Q^*$ such that for any $Q^0\in \mathcal{U}$ and any small enough $\gamma>0$ we have, for all times $t>0$,
    \begin{align*}
        \mathbb{P}\left( \pi(Q^t)=\pi(Q^*)\right )\geq 1-\eta.
    \end{align*}
Moreover, with probability at least $1-\eta$, we have that the vector of expected discounted payoffs under the $Q$-learning dynamics is $\varepsilon$-close to the vector of discounted payoffs under $\pi^*$.
\end{theorem}

\begin{proof}[Proof of Theorem~\ref{thm:QLearn_Local}] Fix $\varepsilon, \eta > 0$. $\pi^*$ is strict, therefore let 
\begin{align*}
\xi \;=\; \min_i\min_{s\in S}\biggl(\max_{a\in A_i} Q^*(s,a) - \max_{a'\in A_i\setminus\{a\}} Q^*(s,a')\biggr) \;>\;0.
\end{align*}
Let $\xi'=\xi/4$. Let $\mathcal{U} := \{ Q : \lVert Q - Q^* \rVert_\infty < \xi' \}$ and fix $Q^0 \in \mathcal{U}$. Fix $L \in \mathbb{N}$ large enough so that:
\begin{align*}
L &\geq \frac{M}{2 \xi' (1- \delta)}, \\
\frac{2M\,\delta^L}{1-\delta} &< \varepsilon,
\end{align*}
where $M$ is the uniform bound on elements in the $Q$-table. Fix $\gamma > 0$ small enough such that 
\begin{align*}
    |S| |N|^2\gamma^2 \left( (1 + \log(2L))^2  + \frac{\pi^2}{3} \right) + \frac{|N|^2 \gamma^2 \pi^2}{12} + |S|L |N|\gamma < \eta.
\end{align*}
    
$Q$-learning starting at $Q^0$ gives rise to a random sequence of states that we denote by $s_1, s_2, s_3, \dots{}$. For any fixed state $s$, let:
\begin{itemize}
    \item $T_s$ be the set of times $t$ such that $s_t=s$.
    \item $T^t_s=T_s\cap \{1,\dots,t\}$ for some $t$.
    \item $T^t_s(L)$ be the $L$-many largest elements of $T^t_s$.
\end{itemize}

Furthermore, let $T^t_{s,a_i}$ be the times $\tau$ such that $s_\tau =s$ and player $i$ plays action $a_i$. These are the times where the $Q$-value for player $i$ of $(s,a_i)$ changes.  

On the probability space induced by this process, the number of deviations at times $t \geq 1$ is a sequence of random variables $(d^t)_{t\ge 1}$ with $d^t \sim \operatorname{Bin}\bigl(|N|,\gamma/t\bigr)$. We denote the set of bad events (with respect to the same probability space as $d^t$) as:
\begin{align*}
    \mathcal{B}(L) = \Bigg \lbrace \text{events} : & \exists t \text{ such that }  \\
    & \left(\left(\sum_{\tau\in T^t_{s_t}(L)} d^\tau \geq 2\right) \text{ or }\left(\sum_{\tau\in T^t_{s_t}(L)} d^\tau \geq 1\text{ and }0<|T^t_{s_t}(L)|< L\right)\right) \Bigg \rbrace.
\end{align*}

So, $\mathcal{B}(L)$ is the event of ``bad exploration samples'' for which there exists some time $t$ such that either:
\begin{itemize}
    \item In the last $L$ times that state $s_t$ has been visited (including time $t$), at least two deviations have occurred.  
    \item That state $s_t$ has been visited fewer than $L$ times (including time $t$), and a deviation has occurred. 
\end{itemize}

The idea of the proof is as follows:
\begin{enumerate}
    \item With $\gamma$ as above, we have that $\mathbb{P}[\mathcal{B}(L)]< \eta$.
    \item With $L$ as above, we have, conditioning on $\mathcal{B}(L)$ not occurring, that $\pi(Q^t) = \pi(Q^*)$ for all $t \geq 0$.
\item  With $L$ as above, the vector of discounted payoffs from the beginning of the learning is $\varepsilon$-close to the vector of discounted payoffs obtained playing $\pi^*$. 
\end{enumerate}

\textbf{Part One:} Show that for every $L > 0$, we can choose $\gamma > 0$ small enough such that $\mathbb{P}[\mathcal{B}(L)]< \eta$.

For each state $s$, consider its visit sequence (times $\tau_1^s < \tau_2^s < \tau_3^s < \ldots$). We first try to bound the probability that, within $L$ visits to state $s$ there are two deviations at two different times. To do this, we consider overlapping windows of $2L$ visits, and bound the probability that there are two deviations at two different times within any of these windows. For the $m$-th window (visits $(m-1)L+1$ through $(m+1)L$), these occur at times $\tau_{(m-1)L+1}^s, \ldots, \tau_{(m+1)L}^s$. The probability that there are two deviations at two different times is:
\begin{align*}
\sum_{(m-1)L+1 \leq i < j \leq  (m+1)L} |N|^2 \frac{\gamma}{\tau^s_i} \cdot \frac{\gamma}{\tau^s_j}
\end{align*}

Since $\tau_j^s \geq j$ for all $j$, we upper bound by:
\begin{align*}
|N|^2 \gamma^2 \sum_{(m-1)L+1 \leq i < j \leq  (m+1)L} \frac{1}{i \cdot j} 
\end{align*}

We now bound this sum for each window:

\textbf{For $m = 1$:} We have $1 \leq i < j \leq 2L$, so:
\begin{align*}
\sum_{1 \leq i < j \leq 2L} \frac{1}{i \cdot j} &\leq \left(\sum_{k=1}^{2L} \frac{1}{k}\right)^2 \leq (1 + \log(2L))^2 
\end{align*}

\textbf{For $m \geq 2$:} We have $(m-1)L+1 \leq i < j \leq (m+1)L$. Since all indices are at least $(m-1)L$:
\begin{align*}
\frac{1}{i \cdot j} \leq \frac{1}{((m-1)L)^2} = \frac{1}{(m-1)^2L^2}
\end{align*}

The number of pairs $(i,j)$ in the window is $\binom{2L}{2} < 2L^2$, so:
\begin{align*}
\sum_{(m-1)L+1 \leq i < j \leq (m+1)L} \frac{1}{i \cdot j} < 2L^2 \cdot \frac{1}{(m-1)^2L^2} = \frac{2}{(m-1)^2}
\end{align*}

Therefore, the probability that there are two deviations at two different times within $L$ visits to any state $s$ is upper bounded by:
\begin{align*}
&|S||N|^2 \gamma^2 \sum_{m=1}^\infty \sum_{(m-1)L+1 \leq i < j \leq  (m+1)L} \frac{1}{i \cdot j} \nonumber \\
&\leq |S| |N|^2\gamma^2 \left( (1 + \log(2L))^2 + \sum_{m=2}^\infty \frac{2}{(m-1)^2} \right) \nonumber \\
&= |S||N|^2 \gamma^2 \left( (1 + \log(2L))^2  + 2 \sum_{k=1}^\infty \frac{1}{k^2} \right) \nonumber \\
&= |S| |N|^2\gamma^2 \left( (1 + \log(2L))^2  + 2 \cdot \frac{\pi^2}{6} \right) \nonumber \\
&= |S| |N|^2\gamma^2 \left( (1 + \log(2L))^2  + \frac{\pi^2}{3} \right)
\end{align*}
The next event that we need to consider is that, at some time $t$, we have $d^t \geq 2$. For this, we have:
\begin{align*}
    \sum_{t=1}^\infty \mathbb{P}(d^t \geq 2) & \leq  \sum_{t=1}^\infty \frac{|N|^2}{2}\cdot\frac{\gamma^2}{t^2} \\
    & = \frac{|N|^2 \gamma^2 \pi^2}{12}
\end{align*}
Hence, 
\begin{align*}
    \mathbb{P} \left ( \exists t  : \sum_{\tau\in T^t_{s_t}(L)} d^\tau \geq 2\right) \leq |S| |N|^2\gamma^2 \left( (1 + \log(2L))^2  + \frac{\pi^2}{3} \right) + \frac{|N|^2 \gamma^2 \pi^2}{12}
\end{align*}

We also bound:
\begin{align*}
    \mathbb{P} \left ( \exists t : \sum_{\tau\in T^t_{s_t}(L)} d^\tau \geq 1\text{ and }0<|T^t_{s_t}(L)|< L \right) & \leq \sum_s \sum_{t=1}^{L-1} \mathbb{P}(d^t \geq 1) \\
    & \leq |S|L |N|\gamma
\end{align*}
Hence, 
\begin{align*}
    \mathbb{P}[\mathcal{B}(L)] \leq |S| |N|^2\gamma^2 \left( (1 + \log(2L))^2  + \frac{\pi^2}{3} \right) + \frac{|N|^2 \gamma^2 \pi^2}{12} + |S|L |N|\gamma,
\end{align*}

and so, by the assumption on $\gamma$, we have $\mathbb{P}[\mathcal{B}(L)]<\eta$, which proves Part One.

\textbf{Part Two:} 
We now condition on the event $\mathcal{B}(L)^c$. Suppose for contradiction that there exists a first time $\bar{t}$ and a state $s'=s_{\bar{t}}$, player $i'$ and action $a'$ for which
\begin{align*}
\bigl|Q^{\,\bar{t}+1}_{i'}(s',a') - Q^*_{i'}(s',a')\bigr| \;>\; 2\xi'.
\end{align*}
Since $\bar{t}$ is the first such time, we have $\|Q_{i'}^t - Q_{i'}^*\|_\infty \leq 2\xi'$ for all $t \leq \bar{t}$. Suppose that all non-$i'$ players do not deviate at time $\bar{t}$. Then,
\begin{align*}
Q_{i'}^{\bar{t}+1}(s',a') &= (1- \alpha^{i'}_{n_{\bar{t}}^{i'}(s',a')})Q_{i'}^{\,\bar{t}}(s',a') \\
& \quad + \alpha^{i'}_{n_{\bar{t}}^{i'}(s',a')} (r_{i'}(s',(a',a_{-i}^*(s'))) + \delta\max_{b} Q^{\,\bar{t}}_{i'}(f(s',(a',a_{-i}^*(s'))),b)).
\end{align*}
As $\pi^*$ is an SPNE, we have that 
\begin{align*}
Q_{i'}^*(s',a') = r_{i'}(s',(a',a_{-i}^*(s'))) + \delta\max_{b} Q^*_{i'}(f(s',(a',a_{-i}^*(s'))),b).
\end{align*}
Therefore,
\begin{align*}
    Q_{i'}^{\bar{t}+1}(s',a') - Q_{i'}^*(s',a') &= (1- \alpha^{i'}_{n_{\bar{t}}^{i'}(s',a')})(Q_{i'}^{\,\bar{t}}(s',a')- Q_{i'}^*(s',a')) \\
& \quad + \alpha^{i'}_{n_{\bar{t}}^{i'}(s',a')} \delta ( \max_{b} Q^{\,\bar{t}}_{i'}(f(s',(a',a_{-i}^*(s'))),b)- \max_{b} Q^{*}_{i'}(f(s',(a',a_{-i}^*(s'))),b)).
\end{align*}
By the triangle inequality and minimality of $\bar{t}$, we therefore have:
\begin{align*}
   | Q_{i'}^{\bar{t}+1}(s',a') - Q_{i'}^*(s',a') | &\leq  2 \xi'(1- \alpha^{i'}_{n_{\bar{t}}^{i'}(s',a')}) +  \alpha^{i'}_{n_{\bar{t}}^{i'}(s',a')} \delta (2 \xi') \\
   & \leq 2 \xi',
\end{align*}
which is a contradiction. Therefore, at time $\bar{t}$ there must be a non-$i'$ player who deviates. As we condition on the event $\mathcal{B}(L)^c$, it must be that $s'$ has been visited at least $L$ times (including time $\bar{t}$) and this non-$i'$ player deviation is the only deviation out of all times in $T_{s_{\bar{t}}}^{\bar{t}}(L)$. Note that this implies that $a'$ is played by player $i'$ for all times $t \in T_{s_{\bar{t}}}^{\bar{t}}(L)$. For every time $t \in T^{\bar{t}}_{s_{\bar{t}}}(L)$ except $t = \bar{t}$, as there are no deviations occurring, we denote the common successor state by $s^+ := s_{t+1}$ and the Bellman targets by
\begin{align*}
T_{t} := r_{i'}(s',(a',a_{-i}^t)) + \delta\max_{b} Q^{\,t}_{i'}(s^+,b).
\end{align*}
By minimality of $\bar{t}$, we have, for these times, that $T_{t} = Q_i^*(s',a') \pm 2\delta\xi'$. For time $\bar{t}$, the Bellman target is 
\begin{align*}
T_{\bar{t}} := r_{i'}(s',(a',a_{-i}^{\bar{t}})) + \delta\max_{b} Q^{\,\bar{t}}_{i'}(s^{\bar{t}+1},b),
\end{align*}
which is a priori only bounded by $M$. 

The single-step $Q$ update for $(s',a')$ at a visit time $t$ is
\begin{align*}
Q_{i'}^{t+1}(s',a') = (1- \alpha^{i'}_{n_t^{i'}(s',a')})Q_{i'}^{\,t}(s',a') + \alpha^{i'}_{n_t^{i'}(s',a')} T_{t}.
\end{align*}
Let $\tau = \min T^{\bar{t}}_{s_{\bar{t}}}(L)$. We unwind this recursion to yield
\begin{align*}
Q_{i'}^{\bar{t}+1}(s',a') = w_0\,Q_{i'}^{\tau}(s',a') + \sum_{t \in T^{\bar{t}}_{s_{\bar{t}}}(L)} w_t T_{t},
\end{align*}
where
\begin{align*}
w_0 &= \prod_{t \in T^{\bar{t}}_{s_{\bar{t}}}(L)} \left(1 - \alpha_{n^{i'}_{t}(s',a')}\right), \\
w_t &= \alpha_{n^{i'}_{t}(s',a')} \prod_{l \in T^{\bar{t}}_{s_{\bar{t}}}(L), l > t} \left(1 - \alpha_{n^{i'}_{l}(s',a')}\right) \quad \text{for } t \in T^{\bar{t}}_{s_{\bar{t}}}(L).
\end{align*}

Note that $w_0 + \sum_{t \in T^{\bar{t}}_{s_{\bar{t}}}(L)} w_t = 1$ and all weights are non-negative. Subtracting $Q_{i'}^*(s',a')$ and applying the triangle inequality gives
\begin{align*}
\bigl|Q_{i'}^{\bar{t}+1}(s',a') - Q_{i'}^*(s',a')\bigr|
&\le w_0\bigl|Q_{i'}^{\tau}(s',a')- Q_{i'}^*(s',a') \bigr| + \sum_{t \in T^{\bar{t}}_{s_{\bar{t}}}(L)} w_t \bigl|T_{t}- Q_{i'}^*(s',a')\bigr|. 
\end{align*}
Let $m = n^{i'}_{\tau}(s',a') - 1$ be the number of times $(s',a')$ was visited before time $\tau = \min T^{\bar{t}}_{s_{\bar{t}}}(L)$. In the above, we obtain weights that depend on the consecutive visit counts to this state-action pair. Specifically, these updates use stepsizes $\frac{\alpha}{m+1}, \frac{\alpha}{m+2}, ..., \frac{\alpha}{m+L}$. Ordering the elements of $T^{\bar{t}}_{s_{\bar{t}}}(L)$ chronologically and reindexing the corresponding weights accordingly, we may write
\begin{align*}
    (w_0, w_t : t \in T^{\bar{t}}_{s_{\bar{t}}}(L)) \equiv (v_0, v_1, \ldots, v_L),
\end{align*}
where
\begin{align*}
v_0 &= \prod_{k=1}^{L} \left(1 - \frac{\alpha}{k+m}\right), \\
v_k &= \frac{\alpha}{k + m} \prod_{l = k+1}^{L} \left(1 - \frac{\alpha}{l+m }\right) \quad \text{for } k \in \{1,...,L\}.
\end{align*}
We therefore have
\begin{align*}
\bigl|Q_{i'}^{\bar{t}+1}(s',a') - Q_{i'}^*(s',a')\bigr|
&\le w_0\bigl|Q_{i'}^{\tau}(s',a')- Q_{i'}^*(s',a')\bigr| + \sum_{t \in T^{\bar{t}}_{s_{\bar{t}}}(L)} w_t \bigl|T_{t}- Q_{i'}^*(s',a')\bigr| \\
&\le v_0\cdot 2\xi' + \sum_{k =1}^{L-1} v_k\cdot 2\delta\xi' + v_{L}\cdot M\\
&= 2\xi' v_0 + 2\delta\xi'(1-v_0-v_{L}) + v_{L} M \\
&\leq 2\xi' (1- \delta)v_0 + 2\delta \xi' + M v_{L}.
\end{align*}
We want
\begin{align*}
2\xi' (1-\delta)v_0 + 2\delta\xi' + M v_{L} \le 2\xi',
\end{align*}
or equivalently
\begin{align*}
M v_{L} \le 2\xi'(1-\delta)(1-v_0).
\end{align*}
As $\alpha < 1$, $v_1 \geq v_2 \geq ... \geq v_{L}$. Therefore, $1-v_0 = \sum_{k=1}^{L} v_k \geq Lv_{L}$. Hence, it suffices to show that:
\begin{align*}
M  \le 2\xi'(1-\delta)L,
\end{align*}
which is given by the condition on $L$ previously provided. 

Therefore no such time $\bar{t}$ exists: for every time $t$, for each player $i$ and every $(s,a)$ we have
\begin{align*}
\bigl|Q_{i'}^{\,t}(s,a) - Q_{i'}^*(s,a)\bigr| \le 2\xi' = \frac{\xi}{2}.
\end{align*}
Therefore, for all players, the argmaxes in each state are the same across $Q_{i'}^t$ as $Q_{i'}^*$. Thus, 
\begin{align*}
    \pi(Q^t) = \pi(Q^*) \text{ for all } t \geq 0.
\end{align*}

\textbf{Part Three:} 

We show that, outside of $\mathcal{B}(L)$, the vector of discounted expected payoffs at the beginning of learning is arbitrarily close to the vector of discounted payoffs obtained by playing $\pi^*$.

Define the two vectors as the discounted expected payoff (from the beginning, stage $t=0$ onward) for each player:
\begin{align*}
V^* &:= \mathbb{E}\Biggl[\sum_{t=0}^\infty \delta^t\, r(s_t,a_t)\;\Big|\; \text{players follow } \pi^*\Biggr],\\
V  &:= \mathbb{E}\Biggl[\sum_{t=0}^\infty \delta^t\, r(s_t,a_t)\;\Big|\; \text{players follow the }Q\text{-learning dynamics}\Biggr].
\end{align*}

As we condition on $\mathcal{B}(L)$ not occurring, no exploration occurs during the first $L$ stages of play. Therefore, the realised action profile at stage $t$ coincides with the action profile induced by $\pi^*$ for every $t=0,1,\dots,L-1$. Hence, the realised stage payoff vectors agree for those first $L$ stages:
\begin{align*}
r(s_t,a_t) = r^*(s_t,a_t)\qquad\text{for }t=0,1,\dots,L-1,
\end{align*}
where $r^*(s_t,a_t)$ denotes the stage payoff under $\pi^*$. Therefore, for each player $i$,
\begin{align*}
\bigl|V_i - V^*_i\bigr|
&= \Biggl|\mathbb{E}\Bigl[\sum_{t=L}^\infty \delta^t\bigl(r_i(s_t,a_t)-r_i^*(s_t,a_t)\bigr)\,\Big|\,\mathcal{B}(L)^c\Bigr]\Biggr| \\
&\le \mathbb{E}\Biggl[\sum_{t=L}^\infty \delta^t\bigl|r_i(s_t,a_t)-r_i^*(s_t,a_t)\bigr|\,\Big|\, \mathcal{B}(L)^c\Biggr] \\
&\le \sum_{t=L}^\infty \delta^t\cdot 2M \\
&= 2M \sum_{t=L}^\infty \delta^t
= 2M\,\delta^L\sum_{k=0}^\infty \delta^k
= \frac{2M\,\delta^L}{1-\delta}.
\end{align*}

Taking the supremum over players gives the uniform bound
\begin{align*}
\|V - V^*\|_\infty \le \frac{2M\,\delta^L}{1-\delta}.
\end{align*}
With $L$ as previously defined we therefore have
\begin{align*}
\|V - V^*\|_\infty < \varepsilon.
\end{align*}

\end{proof}

We now combine Theorem~\ref{thm:QLearn_Local} and Theorem~\ref{thm: payoff_approx} to obtain the main result for $Q$-learning dynamics.

\begin{proof}[Proof of Theorem~\ref{thm: perfmon_Qlearning}]
Fix $\varepsilon>0$. By Theorem~\ref{thm: payoff_approx}, there exists $\delta^*\in(0,1)$ such that for all $\delta\in(\delta^*,1)$ and every $u\in\tilde W$, one can find $\ell\in\mathbb{N}$ and an $\ell$-recall strict equilibrium $\pi^*$ of $\Gamma(\delta)$ such that the distance between $u$ and  $V(\pi^*)$ is at most $\epsilon$. Fix $\delta\in(\delta^*,1)$, an arbitrary $u\in\tilde W$, and a corresponding equilibrium $\pi^*$. We now apply Theorem~\ref{thm:QLearn_Local} to $\pi^*$, which completes the proof.
\end{proof}

\subsection{Characterisation of equilibria}

In this section, we present results that connect variational inequalities involving the $q$-gradient to solution concepts from game theory. Later, we use these variational inequalities to prove our results for the learning dynamics. 
These results hold without change if we allow the players to have different recall lengths, that is, when, with abuse of notation, we denote: $\hat{H}^\ell=\prod_{i\in N}{\hat{H}_i}^{\ell_i}$  for some $\ell \in \mathbb{Z}_{+}^{|N|}$.

\begin{restatable}{lemma}{lemmaRepeatedRestate}\label{lemmaRepeated}
Let $G$ be an $\ell$-recall repeated game as previously defined. For any $q\geq 0$, a strategy profile $\pi^*\in \Pi^\ell$ is a strict Nash equilibrium if and only if the following two conditions are satisfied: 
\begin{itemize}
    \item[C'(i)] For any $\pi\in \Pi^\ell$ we have $\langle v^q(\pi^*), \pi-   \pi^*\rangle \leq 0$.
    \item[C'(ii)] There exists $\epsilon>0$ such that for any $\pi\in \Pi^\ell\setminus \Psi(\pi^*)$ at distance at most $\epsilon$ from $\pi^*$ we have $\langle v^q(\pi), \pi-   \pi^*\rangle < 0$.
\end{itemize}
\end{restatable}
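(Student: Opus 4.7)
The plan is to adapt the one-shot characterisation in Lemma~\ref{lemmaOneShot} to the repeated setting, exploiting that $V_i(\pi)$ is multilinear in $(\pi_1, \dots, \pi_{|N|})$ when each $\pi_j$ is viewed as a distribution over pure $\ell_j$-recall strategies, namely $V_i(\pi_i, \pi_{-i}) = \sum_{\alpha} \pi_{i,\alpha} V_i(e_\alpha, \pi_{-i})$. Formally, the repeated game then looks like a one-shot game on the enlarged action sets $A_i^{\hat{H}_i^{\ell_i}}$; the genuinely new ingredient is the equivalence class $S(\pi^*)$, which arises because distinct mixed profiles can induce identical distributions over realised histories and so be payoff-indistinguishable.

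For the forward direction, first use the Nash property together with the fact that $\pi^*$ is deterministic on-path to conclude that every pure strategy $e_\alpha$ in the support of $\pi^*_i$ yields exactly $V_i(\pi^*)$ against $\pi^*_{-i}$, and every out-of-support $e_\alpha$ yields at most $V_i(\pi^*)$. This forces the weighted average appearing inside $v^q_{i,\alpha}(\pi^*)$ to coincide with $V_i(\pi^*)$, so for $q > 0$ we obtain $v^q(\pi^*) = 0$ and C'(i) is immediate, while for $q = 0$ a direct expansion gives $\langle v^0(\pi^*), \pi - \pi^*\rangle = \sum_{i} [V_i(\pi_i, \pi^*_{-i}) - V_i(\pi^*)] \leq 0$. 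For C'(ii), I would argue by contradiction: if a sequence $\pi^n \to \pi^*$ with $\pi^n \notin S(\pi^*)$ satisfies $\langle v^q(\pi^n), \pi^n - \pi^*\rangle \geq 0$, then some player $i$ has $\pi^n_i$ differing from $\pi^*_i$ on a private history carrying positive $\pi^*$-probability, and the strict-equilibrium property supplies a uniform strict payoff loss $V_i(\pi^n_i, \pi^*_{-i}) - V_i(\pi^*) < 0$ along that perturbation; a Taylor expansion of $\langle v^q(\pi^n), \pi^n - \pi^*\rangle$ about $\pi^*$ using the multilinearity of $V_i$ matches this loss at leading order, contradicting the assumption.

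For the reverse direction, test C'(i) with perturbations of the form $\pi = (\pi_i, \pi^*_{-i})$. For $q = 0$ this is immediately the Nash inequality $V_i(\pi_i, \pi^*_{-i}) \leq V_i(\pi^*)$; for $q > 0$, C'(i) only constrains on-support pure strategies, so combine it with C'(ii) along rays $\pi^* + t(e_\alpha - \pi^*_i)$ for out-of-support $\alpha$, whose leading $t^q$ coefficient in $\langle v^q, \cdot \rangle$ is proportional to $V_i(e_\alpha, \pi^*_{-i}) - V_i(\pi^*)$ and hence forced to be nonpositive. These together make $\pi^*$ a Nash equilibrium. Strictness then follows by contraposition: any payoff-equal deviation with $(\pi_i, \pi^*_{-i}) \notin S(\pi^*)$ would generate nearby profiles on the segment from $\pi^*$ to this deviation violating C'(ii).

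The main obstacle will be C'(ii) and its converse. Directions within $S(\pi^*)$ preserve the induced history distributions, so they lie in the common null set of $V_i$'s variation and of the $q$-gradient; the stability assertion has real content only in directions transverse to $S(\pi^*)$. Decomposing an arbitrary small perturbation into a tangential component (contributing nothing) and a transverse one (on which strictness supplies a strict descent), and then correctly matching the vanishing rate of the weights $\pi_{i,\alpha}^q$ for $q \in (0,1)$ near $\pi^*_{i,\alpha} = 0$ against the strict payoff loss, is where the bulk of the careful bookkeeping is concentrated.
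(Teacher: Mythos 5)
Your overall route is the same as the paper's: reduce to the one-shot structure via multilinearity of $V_i$ over pure $\ell_i$-recall strategies, get C'(i) from the fact that on-support pure strategies of a strict (hence on-path deterministic) equilibrium all earn $V_i(\pi^*)$ (so $v^q(\pi^*)=0$ for $q>0$, and the $q=0$ case is the Nash inequality), and recover strictness in the reverse direction by testing C'(ii) along segments $(1-t)\pi^*_i + t\,e_\alpha$ for $e_\alpha\notin S_i(\pi^*)$, whose inner product has the sign of $V_i(e_\alpha,\pi^*_{-i})-V_i(\pi^*)$. The paper organises this as two propositions (one for $q=0$, one for $q>0$) but the content of those parts is what you describe.

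The gap is in the forward direction of C'(ii), which you sketch but do not carry out, and which is the bulk of the paper's proof. Two points there are genuinely substantive rather than routine Taylor expansion. First, the inner product $\langle v^q(\pi),\pi-\pi^*\rangle$ involves payoffs $V_i(e_\alpha,\pi_{-i})$ against the \emph{perturbed} opponents $\pi_{-i}$, whereas strictness of $\pi^*$ only controls unilateral deviations against $\pi^*_{-i}$; when two or more players simultaneously place mass outside $S(\pi^*)$, a player's payoff can in principle increase. One must show these cross events contribute only $O\bigl(\sum_{j\neq k}\varepsilon_j\varepsilon_k\bigr)$ and are dominated by the first-order loss $\sum_{e_\alpha\notin S_i(\pi^*)}(V_i(e_\alpha,\pi^*_{-i})-V_i(\pi^*))\,\pi_i(e_\alpha)$, which is the paper's $A$/$B_j$/$C$ event decomposition. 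Second, for $q>0$ the reference level in $v^q_{i,\alpha}(\pi)$ is the $q$-weighted average $M_i(\pi)=\sum_\beta\pi_{i,\beta}^q V_i(e_\beta,\pi_{-i})/\sum_\beta\pi_{i,\beta}^q$, whose weights $\pi_{i,\beta}^q/\sum\pi_{i,\beta}^q$ do not vanish at the same rate as $\pi_{i,\beta}$ when $q<1$; the paper bounds the resulting sums $\sum_\beta\pi_{i,\beta}^q$ and $\sum_\beta\pi_{i,\beta}^{q+1}$ via Jensen's inequality (the extrema are attained by either concentrating or equidistributing the deviation mass), and this is exactly the ``matching of vanishing rates'' you defer. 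You have correctly located where the difficulty lies, but as written the proposal asserts rather than proves that the leading-order loss wins, so the central inequality of the lemma is not established.
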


The intuition behind this result is that the variational inequalities can be viewed as a sum of the inner products of $\pi_i - \pi^*_i$ with $v_i(\pi^*)$ or $v_i(\pi)$ for each player $i$. If this inner product is negative, it suggests that moving from $\pi_i$ towards $\pi^*_i$ increases the weight on strategies with higher payoffs, assuming the opponents' strategies are fixed at $\pi^*_{-i}$ or $\pi_{-i}$. Thus, the first condition indicates no profitable deviations from $\pi_i^*$, and the second condition suggests that moving towards $\pi_i^*$ improves the strategy when near $\pi^*$.

Following on from this, for $q=0$, condition \textit{C'(i)} is equivalent to $\pi^*$ being a Nash equilibrium, but, for $q > 0$, it is equivalent to $\pi^*$ being a strategy profile where each player plays their best response among their strategies in the support of $\pi^*$.  

The proof is divided into two propositions; Proposition~\ref{propChar} that proves the lemma for $q=0$, and Proposition~\ref{propCharb} that proves it for $q>0$. 

In the proofs of Propositions~\ref{propChar} and~\ref{propCharb}, we make use of the following definitions and notation.

Let $E_i$ be the set of pure strategies of player $i$, that is, the set of extremal points of $\Pi_i^{\ell_i}$. We index this set according to the order of the components in the vector $\pi_i$, and we denote by $e_\alpha\in E_i$ the pure strategy associated with the $\alpha$-th component of $\pi_i$.

For $e_\alpha\in E_i$, let $\pi_i(e_\alpha)$ denote the $\alpha$-th component of $\pi_i$, which is the probability that the strategy $\pi_i$ assigns to the pure strategy $e_\alpha$. 

\medskip

\begin{proposition}\label{propChar} Lemma \ref{lemmaRepeated} holds in the case $q=0$, which, with the notation of Lemma \ref{lemmaRepeated} is equivalent to the following.
    \begin{enumerate}[label=(\alph*)]
    \item\label{C1partA} Condition \textit{C'(i)} is equivalent to $\pi^*$ being a Nash equilibrium. 
    \item\label{C1partB} The strategy profile $\pi^*$ being a strict equilibrium implies \textit{C'(ii)}.
    \item\label{C1partC} An equilibrium satisfying \textit{C'(ii)} is strict.
    \end{enumerate}
\end{proposition}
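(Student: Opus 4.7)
The proof rests on the algebraic identity
\begin{equation*}
\langle v^0(\pi),\, \pi - \pi^* \rangle \;=\; \sum_{i \in N}\bigl[V_i(\pi_i, \pi_{-i}) - V_i(\pi_i^*, \pi_{-i})\bigr] \;=:\; F(\pi),
\end{equation*}
obtained by expanding $v^0_{i,\alpha}(\pi) = V_i(e_\alpha, \pi_{-i}) - \tfrac{1}{|E_i|}\sum_\beta V_i(e_\beta, \pi_{-i})$, noting that the additive constant is annihilated by $\sum_\alpha (\pi_i(e_\alpha) - \pi_i^*(e_\alpha)) = 0$, and invoking linearity of $V_i$ in $\pi_i$. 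For part (a), I would specialise this to $\pi^*$ in the opponent coordinates to obtain $\langle v^0(\pi^*), \pi - \pi^*\rangle = \sum_i [V_i(\pi_i, \pi^*_{-i}) - V_i(\pi^*)]$: the Nash property makes each summand non-positive and yields C'(i), while plugging profiles $(\pi_i, \pi^*_{-i})$ into C'(i) isolates each single-player best-response inequality for the converse.

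For part (b), the plan proceeds in three steps. \emph{Step 1:} I would show $F$ vanishes on $S(\pi^*)$ via the sub-claim that $\pi \sim \pi^*$ implies $(\pi_i^*, \pi_{-i}) \sim \pi^*$, proved by induction on the period using that every $\pi_j$ agrees with $\pi_j^*$ on on-path histories of $\pi^*$. The same inductive argument shows $S(\pi^*) = \prod_i S_i(\pi^*)$ with $S_i(\pi^*) := \{\pi_i : (\pi_i, \pi^*_{-i}) \sim \pi^*\}$, so $d(\pi, S(\pi^*))^2 = \sum_i d(\pi_i, S_i(\pi^*))^2$ and hence $\sum_i d(\pi_i, S_i(\pi^*)) \geq d(\pi, S(\pi^*))$. \emph{Step 2:} By strictness, the linear functional $g_i(\pi_i) := V_i(\pi^*) - V_i(\pi_i, \pi^*_{-i})$ is non-negative on the polytope $\Pi_i^{\ell_i}$ and vanishes exactly on the face $S_i(\pi^*)$; since $g_i$ is strictly positive on the finitely many vertices outside $S_i(\pi^*)$, decomposing any $\pi_i$ as a convex combination of vertices yields a uniform constant $c_i > 0$ with $g_i(\pi_i) \geq c_i\, d(\pi_i, S_i(\pi^*))$. \emph{Step 3:} Given $\pi \notin S(\pi^*)$ close to $\pi^*$, decompose $\pi = \pi' + v$ with $\pi' \in S(\pi^*)$ the nearest point. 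Linearity of each $V_j$ in $\pi_j$ combined with $\pi'_j \in S_j(\pi^*)$ gives $V_j(v_j, \pi^*_{-j}) = -g_j(\pi_j)$, so summing and applying the bounds of Steps~1--2 yields $DF(\pi^*)(v) = \sum_j V_j(v_j, \pi^*_{-j}) \leq -c \|v\|$ for a uniform $c > 0$. Since $F$ is a polynomial and $F(\pi') = 0$, Taylor expansion around $\pi'$ gives $F(\pi) = DF(\pi')(v) + O(\|v\|^2)$; continuity of $DF$ together with $\|\pi' - \pi^*\|$ small lets me replace $DF(\pi')$ by $DF(\pi^*)$ up to a term of size $o(1)\|v\|$, producing $F(\pi) \leq -(c/2)\|v\| + O(\|v\|^2) < 0$ for $\|v\|$ sufficiently small.

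For part (c), I would argue by contrapositive. Suppose $\pi^*$ is Nash but not strict: some player $i$ has $\pi_i^{(0)} \notin S_i(\pi^*)$ with $V_i(\pi_i^{(0)}, \pi^*_{-i}) = V_i(\pi^*)$. The family $\pi^{(t)} := ((1-t)\pi_i^* + t\pi_i^{(0)},\, \pi^*_{-i})$ lies in $\Pi^\ell$ for $t \in [0,1]$ and outside $S(\pi^*)$ for $t > 0$, because the induced play distribution is affine in $\pi_i$ and $\pi_i^{(0)} \notin S_i(\pi^*)$; yet a direct computation gives $F(\pi^{(t)}) = t[V_i(\pi_i^{(0)}, \pi^*_{-i}) - V_i(\pi^*)] = 0$, contradicting C'(ii). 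The main obstacle will be the linear lower bound in Step~2 of part (b): identifying $S_i(\pi^*)$ as the face of $\Pi_i^{\ell_i}$ where $g_i$ vanishes and extracting a uniform constant $c_i$ from the vertex-level positivity is the technical core; once this and the vanishing of $F$ on $S(\pi^*)$ are in hand, the Taylor step is routine.
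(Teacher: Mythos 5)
Your proposal is correct, and parts (a) and (c) follow essentially the same lines as the paper's proof (part (a) via the telescoping identity $\langle v^0(\pi^*),\pi-\pi^*\rangle=\sum_i[V_i(\pi_i,\pi^*_{-i})-V_i(\pi^*)]$ with single-player deviations for the converse; part (c) via the convex combination $\pi^{(t)}=((1-t)\pi_i^*+t\pi_i^{(0)},\pi^*_{-i})$, which you run as a contrapositive and the paper runs directly). Part (b) is where you genuinely diverge. The paper conditions the expected payoff on the events ``no player leaves $S(\pi^*)$'', ``exactly one player $j$ leaves $S_j(\pi^*)$'', and ``two or more leave'', and then does explicit first-order bookkeeping in the deviation masses $Q_j=\sum_{e_\alpha\notin S_j(\pi^*)}\pi_j(e_\alpha)$, showing the strict single-deviation losses dominate an error of order $\sum_{j\neq k}Q_jQ_k$. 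You instead exploit the geometry: $S(\pi^*)=\prod_i S_i(\pi^*)$ with each $S_i(\pi^*)$ a face of the simplex, the deficit $g_i(\pi_i)=V_i(\pi^*)-V_i(\pi_i,\pi^*_{-i})$ is a non-negative linear functional vanishing exactly on that face and hence bounded below by $c_i\,d(\pi_i,S_i(\pi^*))$, and a Taylor expansion of the polynomial $F$ about the nearest point $\pi'\in S(\pi^*)$ yields $F(\pi)\leq-(c/2)\|v\|+O(\|v\|^2)$ with $\|v\|=d(\pi,S(\pi^*))$. These are the same underlying idea (linear-in-deviation gain from strictness beats a quadratic cross-deviation error), but your execution is cleaner: measuring everything against $d(\pi,S(\pi^*))$ rather than against the probabilities $\varepsilon_j$ of leaving the single on-path pure strategy sidesteps the paper's somewhat delicate distinction between $\varepsilon_j$ and $Q_j$, and the uniform remainder bound is immediate from compactness. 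What the paper's hands-on version buys is that it transfers almost verbatim to the $q>0$ case (Proposition~\ref{propCharb}), where the gradient components carry the weights $\pi_{i,\alpha}^q$ and the clean linear-functional picture no longer applies. The one ingredient you should not leave implicit is the characterisation $\pi_i\in S_i(\pi^*)\iff \mathrm{supp}(\pi_i)\subseteq\{e_\alpha: (e_\alpha,\pi^*_{-i})\sim\pi^*\}$: Steps 1--2, the identity $d(\pi,S(\pi^*))^2=\sum_i d(\pi_i,S_i(\pi^*))^2$, and the claim in part (c) that $\pi^{(t)}\notin S(\pi^*)$ for $t>0$ all rest on it. It is true (prove it by induction on the first period at which a supported pure strategy deviates on-path, using that a strict equilibrium is deterministic on-path), and the paper uses it just as implicitly, but in your write-up it is the load-bearing lemma and deserves a proof.
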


\begin{proof}
In the notation, when clear from context, we omit the history length for simplicity.

\noindent\textbf{Proof of \ref{propChar}\ref{C1partA}}
Condition \textit{C'(i)} in the case $q=0$ reads as follows. For every $\pi\in \Pi^\ell$, we have $\langle v^0(\pi^*), \pi-   \pi^*\rangle \leq 0$. In particular, by rearranging the terms and making the sum explicit, we get that condition \textit{C'(i)} holds for $q=0$ if and only if:

$$\sum_{i\in N}\sum_{e_\alpha \in E_i}\left(V_i(e_\alpha,\pi^*_{-i})\pi_i(e_\alpha)\right)\leq \sum_{i\in N}V_i(e^*_{\alpha(i)},\pi^*_{-i}).$$

By linearity of the rewards, this is equivalent to:

$$\sum_{i\in N}V_i(\pi_i,\pi^*_{-i})\leq \sum_{i\in N}V_i(e^*_{\alpha(i)},\pi^*_{-i}).$$

If $\pi^*$ is an equilibrium then (by definition) no player has a profitable deviation, which is for every $i\in N$, for every $\pi_i\in \Pi_i^{\ell_i}$, it holds that $V_i(\pi_i,\pi^*_{-i})\leq V_i(\pi^*)$, so \textit{C'(i)} holds if $\pi^*$ is an equilibrium.

We prove the other direction by contradiction. Assume that there is a $\pi^*\in \Pi^\ell$ that is not an equilibrium, yet for which \textit{C'(i)} holds. The strategy profile $\pi^*$ not being an equilibrium implies that there exists (at least) one player $j\in N$ that has a unilateral profitable deviation, that is, there exists $e_\beta\in E_j$ such that $V_j(e_\beta,\pi^*_{-j})> V_j(\pi^*)$. The strategy profile $\pi=(e_\beta,\pi^*_{-j})$ gives a contradiction to \textit{C'(i)}: 
\begin{align*}
    \sum_{i\in N}V_i(\pi_i,\pi^*_{-i}) &=\sum_{i\in N\setminus\{j\}}V_i(\pi^*)+V_j(e_\beta,\pi^*_{-j})\\
    &>
 \sum_{i\in N\setminus\{j\}}V_i(\pi^*)+V_j(\pi^*)\\&= \sum_{i\in N}V_i(\pi^*).
\end{align*}

We conclude that for $q=0$, \textit{C'(i)} holds if and only if $\pi^*$ is an equilibrium.

\bigskip

\noindent\textbf{Proof of \ref{propChar}\ref{C1partB}}
Suppose $\pi^*$ is a strict equilibrium (and therefore each player plays a pure strategy), let us denote by $\alpha^*(i)$ the index in the vector $\pi^*_i$ of the pure strategy played by player $i$ in $\pi^*$. Moreover, let us denote by $e_{\alpha^*(i)}$ the pure strategy that player $i$ plays according to $\pi^*_i$ (the strategy with index $\alpha^*(i)$).

To prove that \textit{C'(ii)} holds, we need to show that for any $\pi\in \Pi^\ell\setminus \Psi(\pi^*)$ close enough to $\pi^*$ we have $\langle v(\pi), \pi-   \pi^*\rangle < 0$. As in part \textit{(a)}, this is equivalent to showing that for any such $\pi$ the following holds:

\[\sum_{i\in N}V_i(\pi^*_i, \pi_{-i})>\sum_{i\in N}V_i(\pi_i, \pi_{-i}).\]

What makes the proof of this inequality not trivial is that $\pi$ may have several players placing positive probability on deviations from $\pi^*$. Consider for example the case where according to $\pi$, player $i$ and player $j$ have a positive probability of playing outside of $\Psi_i(\pi^*)$ and $\Psi_j(\pi^*)$ respectively. 
Because $\pi^*$ is a strict equilibrium, player $i$ incurs a strict loss from their own deviations when all the other players play according to $\pi^*_{-i}$. However, it is possible that if player $j$ and player $i$ both deviate simultaneously, one (or more) players gain a higher reward than the one induced by $\pi^*$.

The (rather tedious) computations of the bounds use the proximity of $\pi$ to $\pi^*$ to show that such simultaneous deviations are taking place with such a small probability that their influence on the gradient of the expected reward is negligible.

As we mentioned, $\pi$ being close to $\pi^*$ entails that, in $\pi$, every player plays according to $\pi^*$ with high probability. For every player $i$, we denote by $\varepsilon_i$ the probability according to $\pi$ that player $i$ plays outside of $e^*_{\alpha(i)}$. More concisely, $\varepsilon_i=1-\pi_i(e^*_{\alpha(i)})$. Note that the quantification that states the closeness of $\pi$ to $\pi^*$ can be translated to a bound for $\varepsilon_i$.

As we mentioned, the key to this part of the proof is to measure the effect of simultaneous deviations. For this reason, we introduce notation to denote events where zero, one, or more deviations occur at the same time.  We denote by $A$ the set of pure strategy profiles where each player plays in $\Psi(\pi^*)$, which is, $A = \{e \in \prod_{i \in N} E_i : \forall j \in N,\  e_j \in \Psi_j(\pi^*) \}$. We denote by $B_j$ the set of pure strategy profiles where player $j$ plays a pure strategy outside of $\Psi_j(\pi^*)$ while all the other players play according to $\pi^*_{-j}$, which is for a given player $j\in N$ we have $B_j = \{e \in \prod_{i \in N} E_i : e_j \notin  \Psi_j(\pi^*),\text{ and } \forall  k\in N\setminus\{j\},\ e_k =  e^*_{\alpha(k)}\}$. Finally, we denote by $C$ all the other pure strategy profiles, where at least $2$ players deviate from $\Psi(\pi^*)$. We have, $C = (\prod_{i \in N} E_i) \setminus ( A \cup (\bigcup_{j \in N} B_j) )$.

Using the notation we just introduced, we are ready to rewrite $V_i(\pi)$. As $V_i(\pi)$ is the expectation of reward for player $i$ under policy $\pi$, and because of the linearity of expectations, we can split $V_i(\pi)$ as a sum according to $A, B_j, C$. To better understand the calculations that follow, it is sometimes useful to remember that $\pi$ is a vector (needed for example when thinking about $\langle v(\pi), \pi - \pi^* \rangle$)  but also a distribution of a random variable that might assume variables in $A, B_j$ or $C$ (which comes in handy when dealing with $V_i(\pi) = \mathbb{E}_{\tau \sim \pi}[V_i(\tau)]$). This also means that the pure strategy profiles in $A, B_j$ or $C$ can also be seen as events for the probability distribution $\pi$ where the realised sample $x$ from $\pi$ either has no, one, or more deviations respectively.

\begin{alignat*}{3}
V_i(\pi) &
\begin{aligned}[t]
       &= \mathbb{E}_{x\sim \pi}[V_i(x) | x\in A ]\cdot \mathbb{P}(A) &&+ \sum_{j \in N} \mathbb{E}_{x\sim \pi}[V_i(x) | x\in B_j ] \cdot \mathbb{P}( B_j ) \\ 
    & &&+ \mathbb{E}_{x\sim \pi}[V_i(x) | x\in C] \cdot \mathbb{P}(C) \\  
\end{aligned}\\
    & = V_i(\pi^*) \prod_{j \in N} \left (1 - \sum_{e_\alpha \in  E_j \setminus \Psi_j(\pi^*)} \pi_j(e_\alpha) \right ) \\
    & \begin{aligned}[t]
    &+ \sum_{j \in N} \sum_{e_\alpha \in E_j \setminus \Psi_j(\pi^*)} &&V_i(e_\alpha , \pi^*_{-j}) \frac{\pi_j(e_\alpha)}{\sum\limits_{e_\alpha \in  E_j \setminus \Psi_j(\pi^*)} \pi_j(e_\alpha) } \left (\sum_{e_\alpha \in  E_j \setminus \Psi_j(\pi^*)} \pi_j(e_\alpha) \right ) \prod_{k \neq j} (1- \varepsilon_k) \\
    & &&+ O\left (\sum_{j \neq k} \varepsilon_j \varepsilon_k \right ) \\
    \end{aligned}\\
    & \begin{aligned} 
    &= V_i(\pi^*) \prod_{j \in N} \Big(1 - && \sum_{e_\alpha \in  E_j \setminus \Psi_j(\pi^*)} \pi_j(e_\alpha) \Big) 
    + \sum_{j \in N} \sum_{e_\alpha \in E_j \setminus \Psi_j(\pi^*)} V_i(e_\alpha , \pi^*_{-j}) \pi_j(e_\alpha) \prod_{k \neq j} (1- \varepsilon_k) \\
    & &&+ O\left (\sum_{j \neq k} \varepsilon_j \varepsilon_k \right ).
    \end{aligned}\\
\end{alignat*}

Note that for any player $j$ in $N$, for any pure strategy $e_\alpha$  in $E_j \setminus \Psi_j(\pi^*)$, we have that $\pi_j(e_\alpha) <\varepsilon_j$. Hence, 
\begin{align*}
    V_i(\pi) & \begin{aligned}[t]
        = V_i(\pi^*) \left (1 - \sum_{j \in N} \sum_{e_\alpha \in  E_j \setminus \Psi_j(\pi^*)} \pi_j(e_\alpha) \right ) 
    &+ \sum_{j \in N} \sum_{e_\alpha \in E_j \setminus \Psi_j(\pi^*)} V_i(e_\alpha , \pi^*_{-j}) \pi_j(e_\alpha)\\  &+ O\left (\sum_{j \neq k} \varepsilon_j \varepsilon_k \right ) 
    \end{aligned}\\
    & = V_i(\pi^*) + \sum_{j \in N} \sum_{e_\alpha \in E_j \setminus \Psi_j(\pi^*)} ( V_i(e_\alpha , \pi^*_{-j}) - V_i(\pi^*) ) \pi_j(e_\alpha)  + O\left (\sum_{j \neq k} \varepsilon_j \varepsilon_k \right ).
\end{align*}
Analogously, 
\begin{equation*}
     V_i(\pi^*_i, \pi_{-i}) = V_i(\pi^*) + \sum_{j \neq i} \sum_{e_\alpha \in E_j \setminus \Psi_j(\pi^*)} ( V_i(e_\alpha , \pi^*_{-j}) - V_i(\pi^*) ) \pi_j(e_\alpha)  + O\left (\sum_{j \neq k} \varepsilon_j \varepsilon_k \right ).
\end{equation*}
Hence, 
\begin{align*}
    V_i(\pi^*_i, \pi_{-i}) - V_i(\pi) &= \begin{aligned}[t]
     V_i(\pi^*) &+ \sum_{j \neq i} \sum_{e_\alpha \in E_j \setminus \Psi_j(\pi^*)} ( V_i(e_\alpha , \pi^*_{-j})  - V_i(\pi^*) ) \pi_j(e_\alpha) - V_i(\pi^*) \\
    & - \sum_{j \in N} \sum_{e_\alpha \in E_j \setminus \Psi_j(\pi^*)} ( V_i(e_\alpha , \pi^*_{-j}) - V_i(\pi^*) ) \pi_j(e_\alpha) \\
    &+ O\left (\sum_{j \neq k} \varepsilon_j \varepsilon_k \right )
    \end{aligned}\\
    &= - \sum_{e_\alpha \in E_i \setminus \Psi_i(\pi^*)} ( V_i(e_\alpha , \pi^*_{-i})  - V_i(\pi^*) ) \pi_i(e_\alpha)  + O\left (\sum_{j \neq k} \varepsilon_j \varepsilon_k \right ) \\
    & > 0.
\end{align*}

\noindent\textbf{Proof of \ref{propChar}\ref{C1partC}}
The proof relies on similar ideas as those of \ref{propChar}\ref{C1partA}. Suppose the equilibrium satisfies \textit{C'(ii)}. We show that $\pi^*$ is strict. Consider a unilateral deviation of player~$i$ to $\pi_i \notin \Psi_i(\pi^*)$. For $\varepsilon \in (0,1)$, define the perturbed profile $\pi'(\varepsilon)$ where player $i$ plays $\pi_i$ with probability $\epsilon$, and $\pi^*$ with remaining probability, while the other players always play $\pi^*_{-i}$, i.e. $\pi'(\varepsilon) \;=\; \bigl(\varepsilon\,\pi_i + (1-\varepsilon)\,\pi^*_i,\; \pi^*_{-i}\bigr)$. For all sufficiently small $\varepsilon > 0$, the profile $\pi'(\varepsilon)$ lies within the neighbourhood prescribed by \textit{C'(ii)} and satisfies $\pi'(\varepsilon) \notin \Psi(\pi^*)$, since $\pi_i \notin \Psi_i(\pi^*)$. From \textit{C'(ii)},
$ \bigl\langle v^0\bigl(\pi'(\varepsilon)\bigr),\; \pi'(\varepsilon) - \pi^* \bigr\rangle < 0$. By the equivalence established in part~\ref{C1partA}, this is equivalent to
\begin{align*}
    \sum_{j \in N} V_j\bigl(\pi'_j(\varepsilon),\, \pi'_{-j}(\varepsilon)\bigr)
  \;<\;
  \sum_{j \in N} V_j\bigl(\pi^*_j,\, \pi'_{-j}(\varepsilon)\bigr).
\end{align*}
Since $\pi'_j(\varepsilon) = \pi^*_j$ for all $j \neq i$, the terms for $j \neq i$ cancel on both sides, yielding $V_i(\pi_i,\, \pi^*_{-i}) < V_i(\pi^*)$. Since $i$ and $\pi_i \notin \Psi_i(\pi^*)$ were arbitrary, $\pi^*$ is a strict equilibrium.
\end{proof}

\medskip
\begin{proposition}\label{propCharb}
 Lemma \ref{lemmaRepeated} holds in the case $q>0$, which, with the notation of Lemma \ref{lemmaRepeated} is equivalent to the following.
    \begin{enumerate}[label=(\alph*)]
    \item \label{C2partA} Condition \textit{C'(i)} is equivalent to the following condition: for all $i\in N$, for all $e_\alpha$ in the support\footnote{Our result is stated for pure equilibria $\pi^*$, in which case Condition \textit{C'(i)} implies nothing, but we still state what the condition implies in a more general framework.} of $\pi^*$, $V_i(e_\alpha, \pi^*_{-i}) = V_i(\pi^*)$,
    \item \label{C2partB}  The strategy profile $\pi^*$ being a strict equilibrium implies \textit{C'(ii)},
    \item \label{C2partC}  A strategy profile $\pi^*$ satisfying both \textit{C'(i)} and \textit{C'(ii)} is a strict equilibrium.
    \end{enumerate}
\end{proposition}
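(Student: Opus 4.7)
The plan is to prove the three parts in turn; parts (a) and (c) reduce to direct computations at well-chosen perturbations, while part (b) is the main technical hurdle. Throughout I write $W_{i,\alpha}(\pi) := V_i(e_\alpha, \pi_{-i})$ and $\bar W_i(\pi) := \sum_\beta \pi_{i,\beta}^q W_{i,\beta}(\pi)/\sum_\beta \pi_{i,\beta}^q$, so $v^q_{i,\alpha}(\pi) = \pi_{i,\alpha}^q(W_{i,\alpha}(\pi) - \bar W_i(\pi))$. The structural feature distinguishing $q>0$ from $q=0$ is that $\pi^{*q}_{i,\alpha}=0$ whenever $\alpha \notin \mathrm{supp}(\pi^*_i)$, so $v^q(\pi^*)\equiv 0$ at any pure $\pi^*$ and C'(i) carries no information there; all the content is in C'(ii), which is why part~(c) needs both conditions.

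For part~(a), using $\sum_\alpha(\pi_{i,\alpha}-\pi^*_{i,\alpha})=0$ and $\bar W_i(\pi^*)=V_i(\pi^*)$ (a convex combination on the support), the inner product reduces to
\[\langle v^q(\pi^*),\pi-\pi^*\rangle = \sum_i\sum_{\alpha \in \mathrm{supp}(\pi^*_i)}\pi^{*q}_{i,\alpha}(W_{i,\alpha}(\pi^*)-V_i(\pi^*))(\pi_{i,\alpha}-\pi^*_{i,\alpha}),\]
and the ``if'' direction is immediate. For the converse, if $W_{i,\alpha}(\pi^*)\neq W_{i,\beta}(\pi^*)$ for two support indices, a test $\pi$ shifting infinitesimal mass from the low-valued to the high-valued one (keeping all other players at $\pi^*$) gives a strictly positive inner product, contradicting C'(i).

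For part~(b) I adapt Proposition~\ref{propChar}\ref{C1partB}. WLOG take $\pi^*_i=e_{\alpha^*(i)}$ pure for every $i$, and partition $E_i=\{e_{\alpha^*(i)}\}\sqcup S_i^+\sqcup D_i$, where $S_i^+$ are the pure strategies equivalent to $\pi^*_i$ on on-path histories (so $W^*_{i,\alpha}=V_i(\pi^*)$) and $D_i$ the strict deviations, for which strictness supplies a uniform gap $W^*_{i,\alpha^*(i)}-W^*_{i,\alpha}\geq \Delta>0$. Setting $\epsilon_i:=1-\pi_{i,\alpha^*(i)}$ and $\eta^D_i:=\sum_{\alpha\in D_i}\pi_{i,\alpha}^q$, an expansion of $\bar W_i(\pi)$ reveals that player~$i$'s contribution equals $-\Delta\epsilon_i\eta^D_i/S_i-\Delta\sum_{\alpha\in D_i}\pi_{i,\alpha}^{q+1}+E_i(\pi)$ up to subleading terms, where $E_i(\pi)$ collects the Lipschitz corrections from $W_{i,\alpha}(\pi)-W^*_{i,\alpha}=O(\|\pi_{-i}-\pi^*_{-i}\|)$. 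Since $\pi\notin S(\pi^*)$ forces at least one $\eta^D_i>0$, the dominant terms are strictly negative; the uniform gap $\Delta$ together with the proximity of $\pi$ to $\pi^*$ then allow one to absorb $\sum_i E_i(\pi)$.

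For part~(c), the workhorse is the single-perturbation formula: with $\pi^\varepsilon := ((1-\varepsilon)\pi^*_i+\varepsilon e_\beta,\pi^*_{-i})$, and assuming C'(i) and part~(a) so that $W^*_{i,\gamma}=V_i(\pi^*)$ on the support of $\pi^*_i$, a direct computation gives
\[\langle v^q(\pi^\varepsilon),\pi^\varepsilon-\pi^*\rangle = \frac{\varepsilon^{q+1}(1-\varepsilon)^q(A+A')(W^*_{i,\beta}-V_i(\pi^*))}{(1-\varepsilon)^q A + \varepsilon^q},\]
where $A=\sum_\gamma(\pi^*_{i,\gamma})^q$ and $A'=\sum_\gamma(\pi^*_{i,\gamma})^{q+1}$. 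Three uses of this formula conclude the argument: (1) a strictly profitable pure deviation $e_\beta$ would make the inner product positive, contradicting C'(ii), so $\pi^*$ is Nash; (2) if $\pi^*$ were genuinely mixed over two non-equivalent pure strategies in its support, a mass-shift between them produces $\pi^\varepsilon\notin S(\pi^*)$ on which $\bar W_i(\pi^\varepsilon)=V_i(\pi^*)$ and hence $v^q_i(\pi^\varepsilon)\equiv 0$, yielding a zero inner product and contradicting C'(ii), so $\pi^*$ is effectively pure; (3) any $e_\beta\notin S_i(\pi^*)$ with $W^*_{i,\beta}=V_i(\pi^*)$ would likewise make the formula vanish, contradicting C'(ii), so $\pi^*$ is strict. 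The main obstacle is the bookkeeping in~(b): when $\eta^+_i := \sum_{\alpha \in S_i^+}\pi_{i,\alpha}^q$ dominates $\eta^D_i$, the cross-player corrections in $E_i(\pi)$ become comparable to the dominant term, and the argument must exploit both the uniform gap~$\Delta$ and the fact that these corrections enter weighted by $\eta^D_i+\eta^+_i$ rather than by constants in order to remain controllable.
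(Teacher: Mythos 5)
Parts (a) and (c) of your proposal are sound. Part (a) follows the paper's argument essentially verbatim (the ``if'' direction via $M_i(\pi^*)=V_i(\pi^*)$ on the support, the converse via a mass shift between two support strategies with unequal payoffs). Your part (c) is actually \emph{more} careful than the paper's: the paper tests C'(ii) directly at the pure profile $(\pi'_i,\pi^*_{-i})$, which is not within the $\epsilon$-neighbourhood where C'(ii) applies, whereas your single-perturbation formula at $\pi^\varepsilon=((1-\varepsilon)\pi^*_i+\varepsilon e_\beta,\pi^*_{-i})$ is correct (I checked the algebra), stays local, and your three uses of it cleanly extract Nash-ness, on-path purity, and strictness from C'(i)+C'(ii).

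The genuine gap is in part (b), and it sits exactly where you flag ``the main obstacle''. Consider two players with $Q_1:=\sum_{\alpha\in D_1}\pi_{1,\alpha}>0$ and $Q_2=0$ but $\varepsilon_2>0$, i.e.\ player~2 perturbs only within $S_2(\pi^*)$. Player~2's entire contribution is then the correction term $E_2(\pi)$, and it can be \emph{positive}: if some $e_\gamma\in S_2(\pi^*)$ does strictly better than $\pi^*_2$ against player~1's deviation $e_\beta$ (strictness of $\pi^*$ places no constraint on off-path behaviour, so nothing rules this out --- the paper's own always-Defect/tit-for-tat example realises it for $\delta>1/2$), then $W_{2,\gamma}(\pi)-W_{2,\alpha^*(2)}(\pi)=c_\gamma Q_1$ with $c_\gamma>0$, and the two-point version of your part-(c) formula gives player~2's contribution as $+2\varepsilon_2^{q+1}Q_1c_\gamma(1+o(1))$. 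The only negative term available is player~1's, of order $Q_1^{q+1}\Delta$. Since $\varepsilon_2$ and $Q_1$ are independent small parameters, choosing $Q_1=\varepsilon_2^{K}$ with $K>(q+1)/q$ makes $\varepsilon_2^{q+1}Q_1\gg Q_1^{q+1}$, so the sum is positive. Your proposed resolution --- that the corrections ``enter weighted by $\eta^D_i+\eta^+_i$ rather than by constants'' --- does not close this: the weight $\eta^+_2\varepsilon_2\approx\varepsilon_2^{q+1}$ is small in $\varepsilon_2$, but the negative term is small in the \emph{different} parameter $Q_1$ raised to the power $q+1$, and no uniform gap $\Delta$ bridges two independently tunable scales. (For $q=0$ this problem disappears, because the negative term is linear in $Q_1$ rather than of order $Q_1^{q+1}$; that is why Proposition~\ref{propChar}\ref{C1partB} goes through.)

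You should be aware that the paper's own proof of \ref{propCharb}\ref{C2partB} passes over exactly this case: it claims the per-player inequality $\langle v^q_i(\pi),\pi_i-\pi^*_i\rangle<0$ for every $i$, but its final comparison reduces, for a player with $Q_i=0$, to $K_2^{Q_i}V_i(\pi^*)>K_2^{Q_i}V_i(\pi^*)$, and the earlier assertion $B^{(1)}_i(\pi)>M_i(\pi)$ fails for such a player precisely when $c_\gamma>0$ (the relevant $O(\sum_jQ_j)$ remainder in $B^{(2)}_i$ has the wrong sign). So the missing step in your sketch is not a routine bookkeeping exercise you can defer: either an additional structural input is needed (e.g.\ restricting to $\pi^*$ whose off-path prescriptions remain best responses within $S_i(\pi^*)$ after unilateral deviations, or modifying C'(ii) to quotient by $S(\pi^*)$ in each coordinate), or the claim must be reproved along genuinely different lines. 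As written, part (b) of your proposal does not constitute a proof.
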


\begin{proof}
We prove the points in order.

\noindent\textbf{Proof of \ref{propCharb}\ref{C2partA}}
Recall that a strategy profile $\pi'$ satisfies condition \textit{C'(i)} if for any policy profile $\pi\in \Pi^\ell$, we have $\langle v^q(\pi'), \pi-   \pi'\rangle \leq 0$. This is equivalent to having that for all $\pi\in \Pi^\ell$
$$\sum_{i\in N}\sum_{e_\alpha \in E_i} v_{q,i,\alpha}(\pi') \pi_{i,\alpha}\leq \sum_{i\in N}\sum_{e_\alpha \in E_i} v_{q,i,\alpha}(\pi') \pi'_{i,\alpha}$$ 

or, by making explicit the value of $v_{i,\alpha}^q(\pi')$, this is equivalent to:

\begin{multline*}
\sum_{i\in N}\sum_{e_\alpha \in E_i} \left(\pi_{i,\alpha}'^{q} \left ( V_{i}(e_{\alpha}, \pi'_{-i}) - \frac{ \sum_{e_\beta\in E_i}\pi_{i,\beta}'^{q} V_{i}(e_{\beta}, \pi'_{-i})}{\sum_{e_\beta\in E_i}\pi_{i,\beta}'^{q}} \right )\right) \pi_{i,\alpha}\\ 
\leq\sum_{i\in N}\sum_{e_\alpha \in E_i} \left(\pi_{i,\alpha}'^{q} \left ( V_{i}(e_{\alpha}, \pi'_{-i}) - \frac{ \sum_{e_\beta\in E_i}\pi_{i,\beta}'^{q} V_{i}(e_{\beta}, \pi'_{-i})}{\sum_{e_\beta\in E_i}\pi_{i,\beta}'^{q}} \right )\right) \pi'_{i,\alpha}.
\end{multline*}
Note, however, that the expression $\frac{ \sum_{e_\beta\in E_i}\pi_{i,\beta}'^{q} V_{i}(e_{\beta}, \pi'_{-i})}{\sum_{e_\beta\in E_i}\pi_{i,\beta}'^{q}} $ does not depend on the index $\alpha$ of the pure strategy $e_\alpha$, and so we denote this expression by $M_i(\pi')$.

Using this notation, for the policy profile $\pi^*$ we have that \textit{C'(i)} is equivalent to having that for all $\pi\in \Pi^\ell$,

\begin{align}
\notag  \sum_{i\in N}\sum_{e_\alpha \in E_i} &\left(\pi_{i,\alpha}^{*q} \left ( V_{i}(e_{\alpha}, \pi^*_{-i}) - M_i(\pi^*) \right )\right) \pi_{i,\alpha}\\ 
&\leq
\sum_{i\in N}\sum_{e_\alpha \in E_i} \left(\pi_{i,\alpha}^{*q} \left ( V_{i}(e_{\alpha}, \pi^*_{-i}) - M_i(\pi^*)\right )\right) \pi^*_{i,\alpha}. \label{eq_Cii}  
\end{align}

In order to prove \ref{propCharb}\ref{C2partA}, we therefore can show that this new formulation of \textit{C'(i)}, is equivalent to have that for all $i\in N$, for all $e_\alpha$ in the support of $\pi^*$, we have $V_i(e_\alpha, \pi^*_{-i}) = V_i(\pi^*)$.

To do this, we first assume that for all pure strategies $e_\alpha$ in the support of $\pi^*$ we have $V_i(e_\alpha, \pi^*_{-i}) = V_i(\pi^*)$. Because the expression $M_i(\pi^*)$ is a weighted average of such payoffs (which all have the same value), in this case we have $M_i(\pi^*) = V_i(\pi^*)$. Therefore, 
\begin{equation*}
\sum_{i\in N}\sum_{e_\alpha \in E_i} \left(\pi_{i,\alpha}^{*q} \left ( V_{i}(e_{\alpha}, \pi^*_{-i}) - M_i(\pi^*) \right )\right) \pi_{i,\alpha} = 
\sum_{i\in N}\sum_{\substack{
    e_\alpha \in E_i,\\ \pi^*(e_\alpha) > 0}} \left(\pi_{i,\alpha}^{*q} \left ( V_{i}(\pi^*) -V_{i}(\pi^*) \right )\right) \pi_{i,\alpha} = 0
\end{equation*}
and 
\begin{equation*}
\sum_{i\in N}\sum_{e_\alpha \in E_i} \left(\pi_{i,\alpha}^{*q} \left ( V_{i}(e_{\alpha}, \pi^*_{-i}) - M_i(\pi^*) \right )\right) \pi^*_{i,\alpha} = 
\sum_{i\in N}\sum_{\substack{
    e_\alpha \in E_i,\\ \pi^*(e_\alpha) > 0} }\left(\pi_{i,\alpha}^{*q} \left ( V_{i}(\pi^*) -V_{i}(\pi^*) \right )\right) \pi^*_{i,\alpha} = 0
\end{equation*}
Therefore, both sides of Ineq.~\eqref{eq_Cii} are zero, and the inequality holds.

For the other direction, suppose that a strategy profile $\pi^*$ satisfies \textit{C'(i)}, but, for contradiction, 
there is some player $i \in N$ and some pure strategies $e_\alpha\in E_i$ in the support of $\pi_i^*$, such that $V_i(e_\alpha, \pi^*_{-i}) \neq V_i(\pi^*)$. Because $M_i(\pi^*)$ is a weighted average of the payoffs of all pure strategies in the support of $\pi^*_i$, without loss of generality, there exists $e_\beta$ in the support of $\pi^*_i$ such that $V_i(e_\beta, \pi^*_{-i})< M_i(\pi^*) < V_i(e_\alpha, \pi^*_{-i})$.  For $\eta$ small enough, $\pi_i:=\pi_i^* -\eta e_{\beta} + \eta e_{\alpha}$ is a strategy for player~$i$.  The strategy $\pi:=(\pi_i, \pi^*_{-i})$ gives a contradiction to Ineq.~\eqref{eq_Cii}.

We conclude that Condition \textit{C'(i)} is equivalent to the following condition: for all $i\in N$, for all $e_\alpha$ in the support of $\pi^*$, we have $V_i(e_\alpha, \pi^*_{-i}) = V_i(\pi^*)$.

\bigskip

\noindent\textbf{Proof of Part (b)}

\medskip

Suppose a strategy profile $\pi^*$ is a strict equilibrium. We prove that there exists $\epsilon>0$ such that for any 
$\pi\in \Pi^\ell\setminus \Psi(\pi^*)$ at distance at most $\epsilon$ from $\pi^*$ we have $\langle v^q(\pi), \pi-   \pi^*\rangle < 0$. That is, we want to show that if $\pi^*$ is a strict equilibrium then there exists $\epsilon>0$ such that for any $\pi\in \Pi^\ell\setminus \Psi(\pi^*)$ at distance at most $\epsilon$ from $\pi^*$, it holds that:

\[\sum_i\sum_{e_\alpha \in E_i}v_{q,i,\alpha}(\pi)\pi_{i,\alpha}<\sum_i \sum_{e_\alpha \in E_i}v_{q,i,\alpha}(\pi)\pi^*_{i,\alpha}.\]

We prove the above inequality by obtaining, for each player $i$, a lower bound for the right-hand side (RHS), an upper bound for the left-hand side (LHS), and then comparing the bounds. We are therefore going to fix now an $i$ and show that for this $i$ it holds
\[\sum_{e_\alpha \in E_i}v_{q,i,\alpha}(\pi)\pi_{i,\alpha}<\sum_{e_\alpha \in E_i}v_{q,i,\alpha}(\pi)\pi^*_{i,\alpha}.\]

In the following, we expand and re-elaborate equations in non-intuitive ways. To facilitate this operation, we often name recurrent terms of our equations, we therefore are going to define $B_i^{(1)}(\pi), B_i^{(2)}(\pi), \dots{}, T_i^{(1)}(\pi), \dots{}$ and similar notation.

\subsubsection*{Lower bound for RHS:}

Recall that the policy $\pi^*_i$ gives probability $1$ to player $i$ playing $e_{\alpha(i)}^*$ and probability $0$ otherwise. Therefore, for player~$i$,

\begin{align}
\sum_{e_\alpha \in E_i}v_{q,i,\alpha}(\pi)\pi^*_{i,\alpha}&=\sum_{e_\alpha \in E_i} \left(\pi_{i,\alpha}^{q} \left ( V_{i}(e_{\alpha}, \pi_{-i}) - M_i(\pi) \right )\right) \pi^*_{i,\alpha} \notag\\
&=\pi_{i,\alpha(i)}^{q}(V_i(e^*_{\alpha(i)},\pi_{-i})-M_i(\pi)).\label{Eq_2}
\end{align}

We want to lower bound the reward of player $i$ in the case they play $\pi^*_i$, knowing only that the strategy profile $\pi_{-i}$ is close to $\pi^*_{-i}$. We consider different scenarios for $\pi_{-i}$ and use linearity of expectation to get our lower bound. We first consider how likely each scenario is.

We first want to know with what probability, $\pi_{-i}$ is a policy profile in $\Psi_{-i}(\pi^*)$. Let us denote by $Q_j$ the probability that player $j$ plays a pure strategy that is outside of $\Psi_{j}(\pi^*)$, namely $Q_j = \sum_{e_\alpha \in  E_j \setminus \Psi_j(\pi^*)} \pi_j(e_\alpha)$. Therefore with probability $\prod_{j\neq i}(1-Q_j)$ we have that player $i$ receives reward $V_i(\pi^*)$, as $\pi_{-i}$ is contained in $\Psi_{-i}(\pi^*)$. In this scenario, the reward of player $i$ is $V_i(\pi^*)$ by definition of $\Psi_{-i}(\pi^*)$ and therefore the contribution to the expected reward from this scenario is $$V_i(\pi^*)\cdot \prod_{j\neq i}(1-Q_j)= V_i(\pi^*)\left(1-\sum_{j}Q_j+O\big(\sum_{j,k}Q_jQ_k\big)\right).$$

A different case of interest is when exactly one player $j$ plays a strategy outside of $\Psi_j(\pi^*)$, while others conform to $\pi^*$. This happens with probability $Q_j\prod_{k\in N\setminus\{i,j\}}(1-\epsilon_k)$. While we do know this case is not ideal for player $j$, we do not know what happens to the reward of player $i$ in this scenario; we know though with what probability (and therefore weight) this event influences the final result.

To calculate the effect of this factor on the RHS, we introduce the following notation. We denote by $D_{i,j}(\pi)$ the conditional change of reward for player~$i$ conditioning on player~$j$ playing a pure strategy outside $\Psi_j(\pi^*)$ when playing $\pi_j$. Formally, for $Q_j > 0$:

\begin{align*}
   D_{i,j}(\pi) =  V_i(\pi^*) - \frac{\sum_{e_\alpha\in E_j\setminus \Psi_j(\pi^*)}\left[\pi_j(e_\alpha)V_i(e_\alpha, \pi^*_{-j})\right]}{Q_j}.
\end{align*}
Note that, if $Q_i > 0$, $D_{i,i}(\pi)$ is lower bounded by a strictly positive constant. 

Therefore the contribution to the expected reward from this scenario is 
$$\sum_j\left(V_i(\pi^*)-D_{i,j}(\pi)\right)\cdot Q_j\prod_{k\in N\setminus\{i,j\}}(1-\epsilon_k)= V_i(\pi^*)\sum_jQ_j - \sum_{j\neq i} Q_jD_{i,j}(\pi) + O\left(\sum_{j,k}Q_j\varepsilon_k\right).$$

The remaining case happens with the extremely small probability of $1-\prod_{j\neq i}(1-Q_j)-\sum_{j\neq i}Q_j\prod_{k\in N\setminus \{i,j\}}(1-\epsilon_k)$. While the strictness condition of $\pi^*$ doesn't give us any information about $V_i(\pi)$ in this case either, the order of magnitude of the probability is enough for our calculations because the size of the game is bounded and therefore we have a constant bound on the best (and worst) possible rewards for player $i$. As both lower and upper bounds are constant, this factor only accounts for $O(\sum_{j,k}Q_j\epsilon_k)$ in the total sum.

Thus:

$$V_i(\pi^*) - \sum_{j\neq i} Q_jD_{i,j}(\pi) + O\left(\sum_{j,k}Q_j\varepsilon_k\right)= V_i(e^*_{\alpha(i)},\pi_{-i}).$$

The left-hand side of this equation is a value that we are going to use explicitly in the lower bound of $\sum_i \sum_{e_\alpha \in E_i}v_{q,i,\alpha}(\pi)\pi^*_{i,\alpha}$ but also in the upper bound of $\sum_i\sum_{e_\alpha \in E_i}v_{q,i,\alpha}(\pi)\pi_{i,\alpha}$. The comparison of these two bounds is the largest part of this proof. To make the reading easier, we are going to denote by $B^{(1)}_i(\pi)$ the value $V_i(\pi^*) - \sum_{j\neq i} Q_jD_{i,j}(\pi) + O\left(\sum_{j,k}Q_j\varepsilon_k\right)$ (the reason we use this notation is going to be clear when we analyse the upper bound for the LHS). We hope the reader appreciates the readability of the proof over the explicitness of the factors.

Now, rewriting \ref{Eq_2}, factoring in this last inequality, we obtain:

\begin{align*}   
\left(B^{(1)}_i(\pi)- M_i(\pi)\right)(1-\epsilon_i)^q\leq\sum_{e_j} \left(\pi_{i,j}^{q} \left ( V_{i}(e_{j}, \pi_{-i}) - M_i(\pi) \right )\right) \pi^*_{i,j}.
\end{align*}
\medskip

Where we recall that we used the notation $B^{(1)}_i(\pi)=V_i(\pi^*) - \sum_{j\neq i} Q_jD_{i,j}(\pi) + O\left(\sum_{j,k}Q_j\varepsilon_k\right)$.

\medskip

\subsubsection*{Upper bound for LHS:}
We want to bound the value
\begin{align}
  \sum_{e_\alpha \in E_i}v_{q,i,\alpha}(\pi)\pi_{i,\alpha}=\sum_{e_\alpha \in E_i}\pi_{i,\alpha}^{q+1}\left(V_i(e_\alpha, \pi_{-i}) - M_i(\pi)\right).\label{Eq_3}  
\end{align}

Once more we adopt the strategy we followed during the lower bound: we divide the possible values attained by $\pi$ in cases, and we consider with what probability each might happen, knowing that $\pi$ is close in distribution to $\pi^*$. When all players conform to strategy profile $\pi$, the following cases can happen.

\textbf{Player $i$ plays $e^*_{\alpha(i)}$}. This happens with probability $(1-\epsilon_i)$. This case was already considered while doing the lower bound for the RHS, and we know that $V_i(\pi^*) - \sum_{j\neq i}Q_j D_{i,j}(\pi)+O(\sum_{j,k}Q_j\epsilon_k)$ is the value of $V_i(\pi)$ in this case. Let us denote by  $B^{(1)}_i(\pi)=V_i(\pi^*) - \sum_{j\neq i}Q_j D_{i,j}(\pi)+O(\sum_{j,k}Q_j\epsilon_k)$ this first value of $V_i(\pi)$ which holds with probability $(1-\epsilon_i)$.

\textbf{Player $i$ plays a pure strategy in $\Psi_i(\pi^*)\setminus e^*_{\alpha(i)}$}; this happens with probability $(\epsilon_i-Q_i)$. If we condition ourselves to this case, with probability $\prod_{j\neq i}(1-Q_j)$, the opponents are playing an action profile in $\Psi_{-i}(\pi^*)$ and the payoff to player $i$ is $V_i(\pi^*)$. The case where at least one player in $-i$ plays outside of $\Psi_{-i}(\pi^*)$ is bound in probability by $\sum_{j\neq i}Q_j$; remember that the maximum and minimum reward possible for player $i$ are both considered constants as they are fixed beforehand; therefore the expected reward in this sub-case is $O(\sum_{j} Q_j)$. Therefore we can give an upper bound to the reward of player $i$ given that player $i$ plays a pure strategy in $\Psi_i(\pi^*)\setminus e^*_{\alpha(i)}$. This upper bound is:
\begin{align*}
\mathbb{E}\left[V_i(\pi)|\pi_i\in \Psi_i(\pi^*)\setminus e^*_{\alpha(i)}\right] &= \prod_{j\neq i}(1-Q_j)V_i(\pi^*)+O(\sum_jQ_j)\\
&= (1-\sum_{j\neq i} Q_j)V_i(\pi^*) + O(\sum_{j} Q_j ).
\end{align*}
Let us denote by $B^{(2)}_i(\pi)=(1-\sum_{j\neq i} Q_j)V_i(\pi^*) + O(\sum_{j} Q_j )$ this second bound of value of $V_i(\pi)$ which holds with probability $(\epsilon_i-Q_i)$.

\textbf{Player $i$ plays outside of $\Psi_i(\pi^*)$}; this happens with probability $Q_i$. This case can be further split considering that with probability $\prod_{j\neq i}(1-\epsilon_j)$, the opponents play $\pi^*_{-i}$. In this case player $i$ obtains a reward of $V_i(\pi^*)-D_{i,i}(\pi)$.
If the other players do not play according to $\pi^*$, we have no specific bound for this case, but we should always remember that since the game is finite, the general bound for every reward is constant.
Restricting ourselves to the case that player $i$ plays outside of $\Psi_i(\pi^*)$, the expected reward for player $i$ can be upper-bounded considering that with probability $1-\sum_{j\neq i}\varepsilon_j$, the reward for player $i$ has upper bound $(V_i(\pi^*) - D_{i,i}(\pi))$, and considering that the remaining cases can influence the expectation by at most $O(\sum_j\varepsilon_j)$. Therefore we denote by  $$B^{(3)}_i(\pi)=(1-\sum_{j\neq i}\varepsilon_j)(V_i(\pi^*) - D_{i,i}(\pi)) + O(\sum_j\varepsilon_j)$$ this third probabilistic bound of value of $V_i(\pi)$ in the case that player $i$ plays outside of $\Psi_i(\pi^*)$.

Summing up the probabilities of these scenarios and their expected return for player $i$ and substituting them in Eq \ref{Eq_3}, we obtain:
\begin{align*}   
\sum_{e_\alpha \in E_i}\pi_{i,\alpha}^{q+1}\left(V_i(e_\alpha, \pi_{-i}) - M_i(\pi)\right)<\overbrace{ 
(1-\varepsilon_i)^{q+1}\left[B^{(1)}_i(\pi)-M_i(\pi) \right]}^{T_i^{(1)}(\pi)}& \\  
+\overbrace{\left[B^{(2)}_i(\pi)-M_i(\pi)\right]\cdot\hspace{-0.4cm}\sum_{e_\beta\in \Psi_i(\pi^*)\setminus e^*_{\alpha(i)}}\hspace{-0.4cm}\pi^{q+1}_{i,\beta}}^{T_i^{(2)}(\pi)}& \\
 +\overbrace{\left[B^{(3)}_i(\pi) - M_i(\pi)\right] \cdot\sum_{e_\beta\notin \Psi_i(\pi^*)}\pi^{q+1}_{i,\beta} }^{T_i^{(3)}(\pi)}&.
\end{align*}

We now have to analyse and bound $T^{(1)}_i(\pi), T^{(2)}_i(\pi)$ and $T^{(3)}_i(\pi)$. We recall:
\begin{align*}
    B^{(1)}_i(\pi)&=V_i(\pi^*) - \sum_{j\neq i}Q_j D_{i,j}(\pi)+O(\sum_{j,k}Q_j\epsilon_k),\\
    B^{(2)}_i(\pi)&=(1-\sum_{j\neq i} Q_j)V_i(\pi^*) + O(\sum_{j} Q_j ),\\
    B^{(3)}_i(\pi)&=(1-\sum_{j\neq i}\varepsilon_j)(V_i(\pi^*) - D_{i,i}(\pi)) + O(\sum_j\varepsilon_j).
\end{align*}

\textbf{Let us first consider the term $T_i^{(1)}(\pi)$.} For now, we first observe that we can approximate its factor using Taylor expansion, i.e. $(1-\varepsilon_i)^{q+1}= 1- (q+1)\varepsilon_i + O({\varepsilon_i}^2)$.

\textbf{Let us find an upper bound for the term $T_i^{(2)}(\pi)$.} 
The expression $B^{(2)}_i(\pi)-M_i(\pi)$ can be either positive or negative. Whether it is positive or negative and whether $q\geq 1$ or $q<1$ determines whether the upper bound is obtained by concentrating all probability on one pure strategy, thus obtaining weight $(\varepsilon_i-Q_i)^{q+1}$ or evenly distributing it among the relevant pure strategies obtaining the weight $(|\Psi_i(\pi^*) \cap E_i|-1)\left ( \tfrac{\varepsilon_i-Q_i}{|\Psi_i(\pi^*) \cap E_i|-1} \right )^{q+1} = \tfrac{(\varepsilon_i-Q_i)^{q+1}}{(|\Psi_i(\pi^*) \cap E_i|-1)^q}$, but in either case, the term is going to attain its extrema either by putting all the weight in one action, or by distributing it equally as per Jensen's inequality. As these are the only two options, we use the bound
\[K(\varepsilon_i-Q_i)^{q+1}\left[B^{(2)}_i(\pi)-M_i(\pi)\right].\] 
which works for both cases, for some $K \in (0,1]$.

\textbf{Let us find an upper bound for the term $T_i^{(3)}(\pi)$.} We observe that for small enough $\sum_{j\in N}\varepsilon_j$, the third reward, $B^{(3)}_i(\pi)$ is the smallest one amongst $B^{(1)}_i(\pi)$, $B^{(2)}_i(\pi)$ and $B^{(3)}_i(\pi)$, as $D_{i,i}(\pi)$ is lower bounded by a strictly positive constant. Because $B^{(3)}_i(\pi)$ is the smallest of these rewards, and because $M_i(\pi)$ is a weighted average of them, we have that  for small enough $\sum_{j\in N}\varepsilon_j$, we have that $T_i^{(3)}(\pi)$
is a sum of negative values. We can also consider that $\sum_{e_\beta\notin \Psi_i(\pi^*)}\pi_{i,\beta}=Q_i$, and therefore by Jensen's inequality, an upper bound of $T_i^{(3)}(\pi)$ is obtained by letting all summands have the highest reward of pure strategies in $\Psi_i(\pi^*)$, and evenly distribute the probability $Q_i$ among them. This gives the bound:

\begin{align*}
    T_i^{(3)}(\pi)&\leq |E_i\setminus \Psi_i(\pi^*) |\left(\frac{Q_i}{|E_i\setminus \Psi_i(\pi^*)|}\right)^{q+1}\left[B^{(3)}_i(\pi) - M_i(\pi)\right]\\
    &=\frac{(Q_i)^{q+1}}{|E_i\setminus \Psi_i(\pi^*) |^q}\left[B^{(3)}_i(\pi) - M_i(\pi)\right].
\end{align*}

Taking all these cases together, the upper bound for the LHS is, then,

\begin{align*}   
\textstyle \left(1- (q+1)\varepsilon_i + O({\varepsilon_i}^2)\right)\left[B^{(1)}_i(\pi)-M_i(\pi) \right]  &+ K(\varepsilon_i-Q_i)^{q+1}\left[B^{(2)}_i(\pi)-M_i(\pi)\right]\\ 
&+\tfrac{(Q_i)^{q+1}}{|E_i\setminus \Psi_i(\pi^*) |^q}\left[B^{(3)}_i(\pi) - M_i(\pi)\right].
\end{align*}

\subsubsection*{Comparing the bounds:} Putting together the upper bound of the LHS and the lower bound of the RHS that we obtained so far, we have that we need to prove the following inequality.

\begin{align*}
\textstyle \left(1- (q+1)\varepsilon_i + O({\varepsilon_i}^2)\right)\left[B^{(1)}_i(\pi)-M_i(\pi) \right]  &+ K(\varepsilon_i-Q_i)^{q+1}\left[B^{(2)}_i(\pi)-M_i(\pi)\right]\\ 
&+\tfrac{(Q_i)^{q+1}}{|E_i\setminus \Psi_i(\pi^*) |^q}\left[B^{(3)}_i(\pi) - M_i(\pi)\right]\\
 < \left(B^{(1)}_i(\pi)- M_i(\pi)\right)(1-\epsilon_i)^q.
\end{align*}
We now move to the RHS the $\left(B^{(1)}_i(\pi)- M_i(\pi)\right)$ factor in the LHS, to obtain that what we need is equivalent to proving that:

\begin{align*}
 K(\varepsilon_i-Q_i)^{q+1}&\left[B^{(2)}_i(\pi)-M_i(\pi)\right] 
+\tfrac{(Q_i)^{q+1}}{|E_i\setminus \Psi_i(\pi^*) |^q}\left[B^{(3)}_i(\pi) - M_i(\pi)\right]\\
 &< \left(B^{(1)}_i(\pi)- M_i(\pi)\right) \left(\varepsilon_i + O({\varepsilon_i}^2)\right).
\end{align*}

We now divide both sides by $\varepsilon_i$ to obtain that what we need is equivalent to proving that:

\begin{align*}
K\left (1- Q_i/\varepsilon_i \right )&(\varepsilon_i -Q_i)^{q}\left[B^{(2)}_i(\pi)-M_i(\pi)\right]  + \left (Q_i/\varepsilon_i \right ) \tfrac{Q_i^{q}}{|\Pi^\ell\setminus \Psi(\pi^*) |^q}\left[B^{(3)}_i(\pi) - M_i(\pi)\right]\\
& <(1+ O({\varepsilon_i}))\left[B^{(1)}_i(\pi) - M_i(\pi)\right].
\end{align*}

Because the LHS is vanishingly small for small enough $\sum_i\varepsilon_i$, we have that if $0 < B^{(1)}_i(\pi) - M_i(\pi)$, we get our desired inequality. We now expand $M_i(\pi)$, we see that it remains to show:

\begin{equation}\label{eq_Mi}
B^{(1)}_i(\pi) >\frac{ \sum_{\beta}\pi_{i,\beta}^{q} V_{i}(e_{\beta}, \pi_{-i})}{\sum_{\beta}\pi_{i,\beta}^{q}}.
\end{equation}

To prove the inequality, we start by finding an upper bound to $\frac{ \sum_{\beta}\pi_{i,\beta}^{q} V_{i}(e_{\beta}, \pi_{-i})}{\sum_{\beta}\pi_{i,\beta}^{q}}$. We start by splitting the events as before and obtaining, by linearity of expectation applied to $V_i$:

\begin{align}
\frac{ \sum_{\beta}\pi_{i,\beta}^{q} V_{i}(e_{\beta}, \pi_{-i})}{\sum_{\beta}\pi_{i,\beta}^{q}}
\leq \frac{1}{\sum_{\beta}\pi_{i,\beta}^{q}}\left\{(1-\varepsilon_i)^q  \left[B^{(1)}_i(\pi)\right]\right.
&+\sum_{e_\beta\in \Psi_i(\pi^*)\setminus e^*_{\alpha(i)}}\pi^q_{i,\beta}\left[B^{(2)}_i(\pi)\right]\notag
\\
&+\left.\sum_{e_\beta\notin \Psi_i(\pi^*)}\pi^{q}_{i,\beta}\left[B^{(3)}_i(\pi) \right]
\right\}.\label{eq_Mii}
\end{align}

This is a weighted average of the rewards for the different pure strategies, where the sum is split according to the three events of $i$ either playing according to $\pi^*$, or an action in $\Psi_i(\pi^*)$, or outside $\Psi_i(\pi^*)$. To find an adequate upper bound for this weighted average, we consider how the various components of $\pi_i$ influence the whole average, one event at a time.

\textbf{Let us start with the event of $i$ not playing in $\Psi_i(\pi^*)$.}
For $\sum_j\varepsilon_j$ small enough, we showed that $B^{(3)}_i(\pi) $ is the smallest reward. Taking the derivative of the RHS of \eqref{eq_Mii} with respect to $\sum_{e_\beta\notin \Psi_i(\pi^*)}\pi^{q}_{i,\beta}$, we see that to obtain an upper bound for the RHS of \eqref{eq_Mii} we need to minimise $\sum_{e_\beta\notin \Psi_i(\pi^*)}\pi^{q}_{i,\beta}$. For $q\geq 1$, by Jensen's inequality, this is obtained by placing all the probability on one pure strategy, that is $\sum_{e_\beta\notin \Psi_i(\pi^*)}\pi^{q}_{i,\beta}\leq (Q_i)^q$. For $q<1$, also by Jensen's inequality, the upper bound is by dividing the probability equally between all relevant pure strategies:
$\sum_{e_\beta\notin \Psi_i(\pi^*)}\pi^{q}_{i,\beta}\leq \left(\tfrac{Q_i}{|  E_i \setminus \Psi_i(\pi^*)|}\right)^q$. We denote whichever bound is relevant by $K_3^{Q_i}$.

\textbf{For the event of $i$ playing in $\Psi_i(\pi^*)$ but not $\pi^*_i$}, we now take the derivative of the RHS of \eqref{eq_Mii} with respect to $\sum_{e_\beta\in \Psi_i(\pi^*)\setminus e^*_{\alpha(i)}}\pi^q_{i,\beta}$. We have that $B^{(2)}_i(\pi)$ can be either above or below the weighted average of the other rewards. Whether it is above or below that average and whether $q\geq 1$ or $q<1$ determines whether the upper bound is obtained by concentrating all probability on one pure strategy or evenly distributing it among the relevant pure strategies, but in either case, the RHS of \eqref{eq_Mii} is going to attain its extrema either by putting all the weight in one action, or by distributing it equally as per Jensen's inequality. As these are the only two options, we can argue as above and denote the upper bound by $K_2^{Q_i}$, where $K_2^{Q_i}$ can either be $(\varepsilon_i - Q_i)^q$ or $\left(\tfrac{(\varepsilon_i - Q_i)^q}{|  E_i \cap  \Psi_i(\pi^*)|-1}\right)^q$.

If we substitute $K_3^{Q_i}$ and $K_2^{Q_i}$ as factors in Ineq.~\eqref{eq_Mi} and if we rearrange the terms, we obtain that showing Ineq.~\eqref{eq_Mi}, and therefore proving our result, is equivalent to prove the following:
\begin{multline*}
[(1-\varepsilon_i)^q+K_2^{Q_i} + K_3^{Q_i}][B^{(1)}_i(\pi)]\\ 
>(1-\varepsilon_i)^q\left[B^{(1)}_i(\pi)\right]
+K_2^{Q_i}\left[B^{(2)}_i(\pi)\right]
+K_3^{Q_i}\left[B^{(3)}_i(\pi) \right].
\end{multline*}

Cancelling  $(1-\varepsilon_i)^q(B^{(1)}_i(\pi))
$, we get that it suffices to show:
$$
[K_2^{Q_i} + K_3^{Q_i}][B^{(1)}_i(\pi)] >
K_2^{Q_i}\left[B^{(2)}_i(\pi)\right]
+K_3^{Q_i}\left[B^{(3)}_i(\pi) \right].
$$

Taking $\sum_
j \varepsilon_j$ to zero, and making explicit the values of $B_i^{(1)}(\pi), B_i^{(2)}(\pi)$ and $B_i^{(3)}(\pi)$ the inequality becomes:

\[
[K_3^{Q_i} + K_2^{Q_i}]V_i(\pi^*) > 
K_2^{Q_i} V_i(\pi^*)
+K_3^{Q_i}(V_i(\pi^*) - D_{i,i}(\pi)).
\]

Which holds since $D_{i,i}(\pi)>0$. This concludes the proof.

\noindent\textbf{Proof of \ref{propCharb}\ref{C2partC}}
Suppose a strategy profile $\pi^*$ satisfies $C'(i)$ and $C'(ii)$. By contradiction suppose $\pi^*$ is not a strict equilibrium. That is, there is at least one player, player $i$, and at least one pure strategy for player $i$, $\pi'_i\not\in \Psi_i(\pi^*)$, such that $V_i(\pi'_i, \pi^*_{-i})\geq V_i(\pi^*)$. The strategy profile $\pi=(\pi'_i,\pi^*_{-i})$ gives $\langle v^q(\pi),\pi-\pi^*\rangle\geq 0$, a contradiction to $C'(ii)$.

\end{proof}

As a corollary, we have, as a private case of the previous lemma where the length of each player's histories is zero and the set of histories is the empty set: 

\begin{corollary}
\label{lemmaOneShot}
Let $G$ be a one-shot game as previously described. For any $q\geq 0$, a strategy profile $\pi^* \in \prod_{i \in N} \Delta(A_i)$ is a strict Nash equilibrium if and only if the following two conditions are satisfied:
\begin{itemize}
    \item[C(i)] For any strategy profile $\pi\in \prod_{i\in N}\Delta(A_i)$,  we have $\langle v^q(\pi^*), \pi-   \pi^*\rangle \leq 0$.\footnote{A strategy with this property is sometimes known as first-order stationary policy. For $q= 0$, this property is equivalent to the strategy being a Nash equilibrium. Whereas, for $q > 0$, this property is equivalent to each player only randomising over actions between which they are indifferent, which is referred to as a selection equilibrium in \cite{viossat2005correlated}.}
    \item[C(ii)] There is $\epsilon>0$ such that for any strategy profile $\pi\in \prod_{i\in N}\Delta(A_i) \setminus \{ \pi^* \}$ at distance at most $\epsilon$ from $\pi^*$, we have $\langle v^q(\pi), \pi-   \pi^*\rangle < 0$.\footnote{A first-order stationary policy with this second property is also known as stable.}
\end{itemize}
\end{corollary}

\subsection{Proof of Theorem \ref{thm: perfmon_qrepl}} \label{Appendix_proofs_thm_qRep}

We recall the main convergence result for the $q$-replicator dynamics.

\thmPerfMonqRep*

To prove this theorem, we first clarify the assumptions imposed on the $q$-gradient estimator.

Because $\hat{v}^t_i$ is a random variable, in this setting the $q$-replicator dynamics $\pi^t$ is a stochastic process. We write $\mathcal{F}^{t} := \mathcal{F}(\pi^{0},...,\pi^{t})$ for the filtration of the probability space up to and including episode $t$. 
We define 
$$ U^{t} = \hat{v}^{t} - \mathbb{E}[\hat{v}^{t} | \mathcal{F}^{t-1}]  \hspace{1cm}\text{ and }  \hspace{1cm} b^{t} = \mathbb{E}[\hat{v}^{t} | \mathcal{F}^{t-1}] - v^q(\pi^{t}).$$

We assume that $U^{t}$ and $b^t$ are bounded such that: 
$$ \mathbb{E} \left [ \lVert U^t \rVert^{2} | \mathcal{F}^{t-1} \right ] \leq (\sigma^{t})^{2}  \hspace{1cm} \text{and} \hspace{1cm} \mathbb{E} \left [ \lVert b^t \rVert | \mathcal{F}^{t-1} \right ] \leq B^t,$$ 
where $\sigma^t  = \mathcal{O}(t^{\ell_{\sigma}})$ and $B^t = \mathcal{O}(t^{-\ell_{b}})$ for $\ell_{\sigma}, \ell_b > 0$. An example of a procedure that obtains estimators satisfying the above assumptions is REINFORCE that is detailed in Appendix~\ref{sec: reinforce}. Under these assumptions, we obtain the following result, which is a more general result than what is stated in the main text. The main text refers to strict subgame-perfect equilibria, while our result only resorts to strictness. To this end, a definition. 
\begin{definition}[$\ell$-recall strict equilibrium]\label{def_ell_strict}
A strategy profile $\pi^*\in \Pi^{\ell}$ is an $\ell$-recall strict equilibrium if  for any player $i$ and any strategy $\pi_i \in \Pi^{\ell}_i$, we have $V_{i}(\pi^*) > V_{i}(\pi_i,\pi_{-i}^*)$ or $(\pi_i,\pi_{-i}^*) \in \Psi(\pi^*)$. 
\end{definition}

Note that any $\ell$-recall strict equilibrium $\pi^*$ must be deterministic on-path. 

\begin{theorem}[Local convergence of $q$-replicator dynamics]\label{thm:qrep_local}
Let $\pi^*\in \Pi^\ell$ be an $\ell$-recall strict equilibrium, and let $\pi_{t}$ be the sequence of play generated by $q$-replicator learning dynamics with stepsize $\gamma_{t} = \gamma / (t+m)^{p}$, $p \in (1/2,1]$ and $q$-replicator estimates such that $p + \ell_{b} > 1$ and $p - \ell_{\sigma} > 1/2$. Then there exists a neighbourhood $\mathcal{U}$ of $\pi^*$ in $\Pi^\ell$ such that, for any given $\eta>0$ we have: 
$$
    \mathbb{P}\left( \pi_t \rightarrow \Psi(\pi^*) \text{ as } t \rightarrow \infty\right )\geq 1-\eta.
$$
provided that $\gamma$ is small enough (or $m$ is large enough) relative to $\eta$.
\end{theorem}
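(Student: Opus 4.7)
The plan is to follow the Lyapunov / stochastic approximation strategy of \cite{giannou2021convergence}, adapted to our repeated-game setting via the variational characterisation in Lemma~\ref{lemmaRepeated}. Let $F(\pi) := \tfrac{1}{2}\lVert \pi - \pi^*\rVert^{2}$. Since $\pi^* \in \Pi^{\ell}$ and the Euclidean projection onto the convex set $\Pi^{\ell}$ is non-expansive, the update $\pi^{n+1} = \text{proj}_{\Pi^{\ell}}(\pi^{n} + \gamma^{n}\hat{v}^{n})$ yields the one-step energy inequality
$$F(\pi^{n+1}) \leq F(\pi^{n}) + \gamma^{n}\langle \hat{v}^{n},\, \pi^{n} - \pi^{*}\rangle + \tfrac{1}{2}(\gamma^{n})^{2}\lVert \hat{v}^{n}\rVert^{2},$$
where I abuse notation and write $\gamma^n$ for the common order $\gamma/(n+m)^p$ of the $\gamma^n_i$. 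Decomposing $\hat{v}^{n} = v^{q}(\pi^{n}) + b^{n} + U^{n}$ splits the inner product into a deterministic drift, a bias term, and a martingale difference with respect to $\mathcal{F}^{n-1}$.

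The strict-equilibrium hypothesis on $\pi^*$ together with Lemma~\ref{lemmaRepeated} (condition $C'(ii)$) supplies a neighbourhood $\mathcal{U}_{0}$ of $\pi^{*}$ on which $\langle v^{q}(\pi), \pi - \pi^{*}\rangle \leq 0$, with a strict inequality that is quantitatively bounded away from zero in terms of the distance from $\pi$ to $S(\pi^{*})$. So inside $\mathcal{U}_{0}$ the drift term is dissipative relative to $S(\pi^{*})$. The error contributions are controlled by the assumed moment bounds: $\mathbb{E}[\langle b^{n},\pi^{n}-\pi^{*}\rangle \mid \mathcal{F}^{n-1}] \leq B^{n}\lVert \pi^{n}-\pi^{*}\rVert$ and $\mathbb{E}[\lVert \hat v^n\rVert^2\mid \mathcal{F}^{n-1}] = \mathcal{O}(1) + (\sigma^{n})^{2}$. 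The assumptions $p + \ell_{b} > 1$ and $p - \ell_{\sigma} > 1/2$ ensure that $\sum_{n}\gamma^{n}B^{n} < \infty$ and $\sum_{n}(\gamma^{n})^{2}(\sigma^{n})^{2} < \infty$, so the bias and noise energies are summable.

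Next, I would run a confinement argument. Shrink to a smaller neighbourhood $\mathcal{U} \subset \mathcal{U}_{0}$ and define the stopping time $\tau := \inf\{n : \pi^{n} \notin \mathcal{U}_{0}\}$. On $\{n < \tau\}$ the recursion above, after conditioning on $\mathcal{F}^{n-1}$, exhibits $F(\pi^{n})$ as an almost-supermartingale in the sense of Robbins–Siegmund: its conditional increments are bounded above by a non-positive drift plus summable error terms, plus a martingale increment whose conditional second moment is also summable. Doob's maximal inequality, applied to the stopped process, bounds $\mathbb{P}(\tau < \infty)$ by $\eta$ provided $F(\pi^{0})$ is small (which we achieve by placing $\pi^{0}$ in a sufficiently small $\mathcal{U}$) and $\gamma$ is chosen small enough (equivalently $m$ large enough), so that the accumulated error up to the first exit is small relative to the width of $\mathcal{U}_{0}\setminus \mathcal{U}$. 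On the complementary event $\{\tau = \infty\}$, Robbins–Siegmund gives almost-sure convergence of $F(\pi^{n})$ together with $\sum_{n}\gamma^{n}|\langle v^{q}(\pi^{n}),\pi^{n}-\pi^{*}\rangle| < \infty$, and non-summability of $\gamma^{n}$ (since $p \leq 1$) combined with the strict drift bound from $C'(ii)$ away from $S(\pi^{*})$ forces $\pi^{n} \to S(\pi^{*})$.

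The main obstacle is the confinement step: since the strict equilibrium only has a local basin, the whole argument collapses if there is any positive probability that the noise eventually pushes the iterate out of $\mathcal{U}_{0}$. The technical tension is that we need this escape probability to be uniformly small over the infinite horizon, which is exactly what the exponent conditions $p + \ell_{b} > 1$ and $p - \ell_{\sigma} > 1/2$ buy us by making the martingale's quadratic variation and the aggregate bias summable. A secondary subtlety is that the natural Lyapunov function is distance to $\pi^{*}$, whereas the conclusion must be stated in terms of $S(\pi^{*})$: the remark following Definition~\ref{def_ell_strict} shows that elements of $S(\pi^{*})$ induce the same play distribution, so the variational inequality $C'(ii)$ is insensitive within $S(\pi^{*})$, and one passes from convergence of $F$ to convergence to $S(\pi^{*})$ by noting that the strict drift penalises only the component of $\pi^{n}-\pi^{*}$ that changes the equivalence class.
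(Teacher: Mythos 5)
Your proposal is correct and follows essentially the same route as the paper's proof: the same Lyapunov function $\tfrac{1}{2}\lVert \pi-\pi^*\rVert^2$, the same one-step inequality from non-expansiveness of the projection, the same decomposition of $\hat v^n$ into drift, bias and martingale noise controlled by the exponent conditions, and the same two-stage structure of high-probability confinement followed by convergence on the confinement event. The only difference is presentational: you invoke a stopping time with Doob's maximal inequality and Robbins--Siegmund, whereas the paper tracks the aggregated error process $W_n=M_n^2+S_n$ explicitly and concludes via the strong law for martingale differences, Doob's submartingale convergence and Gladyshev's lemma --- the same machinery in different packaging (and your closing remark on distinguishing convergence to $S(\pi^*)$ from convergence to $\pi^*$ is in fact handled more carefully than in the paper's appendix).
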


\begin{proof}[Proof of Theorem~\ref{thm:qrep_local}] The result follows by an adaptation of the proof of Theorem~2 in \cite{giannou2021convergence}. That theorem establishes local almost-sure convergence of projected gradient dynamics to a strict Nash equilibrium of a finite stochastic game under the same stepsize and noise conditions.

Although our setting is that of $\ell$-recall repeated games rather than stochastic games with an exogenous state space, the two models are equivalent once histories are treated as states. Concretely, the set of length-$\ell$ histories forms a finite state space, transitions are deterministic given the current history and action profile, and an $\ell$-recall strategy $\pi\in\Pi^\ell$ is precisely a stationary Markov policy on this induced stochastic game. Under this identification, $\ell$-recall strict equilibria correspond exactly
to strict Nash equilibria in the induced stochastic game.

The stochastic-approximation structure of the policy updates is unchanged by this reformulation, and the stepsize and moment assumptions ensure that all martingale and bias estimates used in \cite{giannou2021convergence} continue to hold. The only substantive difference is that, in the repeated-game setting, some histories may be off-path under $\pi^*$ and hence visited with zero probability. As a result, convergence cannot be guaranteed to a unique policy on the entire history space.

However, inspection of the proof of \cite{giannou2021convergence} shows that the result will hold for any set of policies that satisfy certain variational inequalities for a given dynamic. Once one establishes the appropriate variational inequality characterisation of strict equilibria on the on-path histories for $q$-replicator dynamics, which is precisely the content of Lemma~\ref{lemmaRepeated}, the local Lyapunov and stability arguments apply verbatim and thus apply for $q>0$. Consequently, the learning dynamics converge with arbitrarily high probability to the set $\Psi(\pi^*)$ of policies that coincide with $\pi^*$ on all histories reached under $\pi^*$.

This yields the stated result.
\end{proof}

We now combine Theorem~\ref{thm:qrep_local} and Theorem~\ref{thm: payoff_approx} to establish the main result for $q$-Replicator dynamics.

\begin{proof}[Proof of Theorem~\ref{thm: perfmon_qrepl}]
Fix $\varepsilon > 0$, then by Theorem~\ref{thm: payoff_approx}, there is $\delta^*\in (0,1)$ such that for all $\delta\in (\delta^*, 1)$ and every $u\in \tilde{W}$, there exists $\ell \in \mathbb{N}$ and an $\ell$-recall strict equilibrium $\pi^*$ of $\Gamma(\delta)$ such that the distance between $u$ and  $V(\pi^*)$ is at most $\epsilon$. Take such $\delta\in (\delta^*, 1)$ and, for a fixed $u\in \tilde{W}$, such a $\pi^*$. We now apply Theorem~\ref{thm:qrep_local} to this strict finite-recall equilibrium $\pi^*$. That theorem establishes that, for any given $\eta>0$, there exists a neighbourhood $\mathcal U$ of $\pi^*$ in $\Pi^\ell$ such that, provided the stepsize parameters of the $q$-replicator dynamics are chosen appropriately, the stochastic learning process converges with probability at least $1-\eta$ to the set $\Psi(\pi^*)$ of policies that coincide with $\pi^*$ on all on-path histories.

Since all policies in $\Psi(\pi^*)$ induce the same distribution over play along the equilibrium path, they generate the same expected payoff vector as $\pi^*$. In particular, the payoff vector induced by any limit point of the learning dynamics lies within $\epsilon$ of the target payoff $u$. Therefore, starting from initial conditions in $\mathcal U$, the stochastic $q$-replicator dynamics converge with arbitrarily high probability to an equilibrium outcome whose expected payoffs approximate $u$ to within $\epsilon$.

As  $\epsilon>0$  and $u\in\tilde W$ were arbitrary, this establishes that every feasible and individually rational payoff vector admits an $\ell$-recall strict equilibrium with a basin of attraction under the standard stochastic $q$-replicator dynamics. This completes the proof.
\end{proof}

\section{Approximating the gradient: REINFORCE}\label{sec: reinforce}

In order to implement the $q$-replicator dynamics \eqref{eq: q-replicator}, each player needs to be able to compute $v_{q,i}(\pi^t)$, the $q$-gradient of $V_i$ at $\pi^t$. However, in most practical cases it may not be reasonable to assume that player $i$ knows $\pi_{-i}$, or even their own expected reward function.

This issue may be overcome by player $i$ in episode $t$ having access to an estimator, $\hat{v}^t_i$, of their $q$-gradient $v_{i,q}(\pi^t)$.  We show that any estimator satisfying certain decreasing bounds on bias and variance is sufficient for our convergence result. One such estimator is the well-studied algorithm REINFORCE, which allows each player to compute an unbiased estimation of $v_{i,q}(\pi^t)$ only knowing their strategy $\pi^t_i$ and their realised reward, which we define the following way:

For a realised history $h = (a^1, z^1, a^2, z^2, \dots)$, let
\begin{align*}
    R_i(h) \;=\; \sum_{t = 1}^\infty \,\delta^{t-1}\, R_i(a^t),
\end{align*}
that is, $R_i(h)$ denotes the total discounted reward accumulated by player $i$ along the history $h$, where rewards realised in period $t$ are discounted by the factor $\delta^{t-1}$.

The version of $q$-replicator dynamics that uses the REINFORCE approximation of $v_{i,q}(\pi^t)$ has similar convergence conditions to the one that assumes each player can compute its true value.

In our setting, we have a history $h$ sampled from players playing according to $\pi^t\in\Pi^\ell$. Each player $i$ knows their reward $R_i(h)$ associated with history $h$ and can calculate a value $\Lambda_i(h)$. This value can be interpreted as a measure of the probability of the actions taken by player $i$ that resulted in $h$ being realised. 


In this setting, REINFORCE is an algorithm that takes as inputs $R_i(h)$ and $\Lambda_i(h)$ and gives as output an estimate for $v_{i,q}(\pi^t)$. This estimate is unbiased if, for every possible history $h$, we have that $\pi^t$ assigns to every possible action profile a probability bounded away from zero. This is achieved using the $\epsilon$-greedy $q$-replicator dynamics as defined below, which is an example of how REINFORCE can be used in practice by players in the context of $q$-replicator dynamics.

\begin{center}
\begin{minipage}{0.44\textwidth}
\begin{algorithm}[H]
    \centering
    \caption{REINFORCE}\label{algorithm}
    \begin{algorithmic}[1]
        \State \textbf{Input:}    $R_i(h)$, $\Lambda_i(h)$, $\hat{\pi}_i$
        \State $\hat{w}_i\gets R_i(h)\cdot \Lambda_i(h)$ 
        \State $\hat{v}_i\gets \hat{\pi}_{i,j}^q \left ( \hat{w}_{i}(e_{j}) - \frac{ \sum_{k}\hat{\pi}_{i,k}^{q} \hat{w}_{i}(e_{k})}{\sum_{k}\hat{\pi}_{i,k}^q} \right )$ 
        \State
        \textbf{return}  $\hat{v_i}$
    \end{algorithmic}
\end{algorithm}
\end{minipage}
\begin{minipage}{0.55\textwidth}
\begin{algorithm}[H]
    \centering
    \caption{$\varepsilon$-GREEDY $q$-REPLICATOR}\label{algorithm1}
    \begin{algorithmic}[1]
    \State \textbf{Input:} $\pi^0\in \Pi^\ell$, $\{\gamma^t_i\}_{i\in N, t\in\mathbb{N}}$, $\varepsilon\in (0,1)$
        \For{$t=1,2,\ldots$}
        \State $\hat{\pi}^t\gets(1-\varepsilon)\pi^t + \varepsilon$ 
        \State $\text{Sample }h\sim \hat{\pi}^t$
        \For{$i\in N$}
       \State Compute $R_i(h)$, 
       \State $\Lambda_i(h)\gets \sum_{t=0}^{\tau(h)}\nabla_i(\log(\hat{\pi}_i(a_i^t|\hat{h}_i^{\ell_i})))$
        \State $\hat{v}^t_i\gets \text{REINFORCE}(R_i(h), \Lambda_i(h), \hat{\pi}_i^t)$ 
        \State $\pi_i^{t+1}\gets \text{proj}_{\Pi_i}(\pi_i^t+\gamma^t_i\hat{v}_i^t)$
        \EndFor
        \EndFor
    \end{algorithmic}
\end{algorithm}
\end{minipage}
\end{center}

Where $(1-\varepsilon)\pi^t + \varepsilon$ is the strategy profile where for each history $h$, each player $i$ plays $\pi_i^t$ with probability $1-\varepsilon$, and with the remaining probability $i$ plays an action sampled uniformly from $A_i$.

\section{Sequential Equilibria }\label{sec: sequentialequi}

\subsection{A strategy profile with a basin of attraction that is not a sequential equilibrium}

Consider the following variation of prisoners' dilemma, with the actions indexed according to the players: 

\begin{center}
	\begin{tabular}{|l|c|c|}
	    \hline
		\textbf{} & \textbf{$C_2$}& \textbf{$D_2$}\\
		\hline
		\textbf{$C_1$}& $4,4$& $0,5$\\
		\hline
		\textbf{$D_1$}& $5,0$& $2,2$\\
		\hline
	\end{tabular}
\end{center}

Consider a simple symmetric one-recall strategy profile for each player $i \in \{1,2\}$: 
\begin{itemize}
    \item Following the histories $(C_1,C_2)$ or $(D_1,D_2)$ or the empty history, play $C_i$,
    \item Following the history $(D_1,C_2)$ or $(C_1,D_2)$, play $D_i$.
\end{itemize}

With perfect monitoring, this is a subgame-perfect equilibrium for sufficiently patient players. However, we consider a game with imperfect monitoring. To this end, let $c_1$ and $d_1$ ($c_2$ and $d_2$) be the private signals for player~2 (player~1) regarding the actions taken by player~1 (player~2). Suppose that these signals are accurate following the action profiles $(C_1, D_2)$, $(D_1, C_2)$ and $(D_1, D_2)$. That is, following a period when $(C_1, D_2)$ was played, player~1 observes the signal $d_2$ with probability 1 (accurately reflecting the action of player~2), and player~2 observes the signal $c_1$ with probability 1. Similarly, the signal profile following periods when $(D_1, C_2)$ or $(D_1, D_2)$ were played are with probability 1 $(c_2,d_1)$ and $(d_1, d_2)$ respectively.

However, when $(C_1, C_2)$ is played, there is some small probability of inaccurate private signals. More specifically, following a period when $(C_1, C_2)$ was played, the distribution of signals is:

\[
  q((C_1,C_2)) = \left\{
     \begin{array}{@{}l@{\thinspace}l}
       (c_2,c_1)  &: \text{with probability } 1-\varepsilon_1-\varepsilon_2 - \varepsilon_3\\
       (d_2, c_1) &: \text{with probability } \varepsilon_1 \\
       (c_2, d_1) &: \text{with probability } \varepsilon_2\\
       (d_2, d_1) &:\text{with probability } \varepsilon_3 \\

     \end{array}
   \right.
\]

Consider the one-recall strategy for player~1 that begins with playing $C_1$ replies to the action-signal combinations $(C_1, c_2)$ and $(D_1,d_2)$ with $C_1$ and otherwise with $D_1$. For a range of small $\varepsilon$-s, this is the best response to a similar strategy played by player~2 among the one-recall strategies. This is easily computed by considering each one-recall pure strategy of player~1, combined with the strategy of player~2 as an MDP, and finding the stationary distribution of the resulting MDP. Therefore, for this range of $\epsilon$-s, this is a one-recall strict equilibrium and thus has a basin of attraction.

To discuss sequential equilibrium, we should detail the Bayesian updating of beliefs. Suppose player~1 played $C_1$ and observes $d_2$. This can be the result of three situations:
\begin{itemize}
    \item Option 1 - player~2 deviated and played $D_2$ when they should have played $C_2$.
    \item Option 2 - player~2 conformed, played $C_2$, but the signal was wrong.
    \item Option 3 - Player~2, before the previous period, played $C_2$ observed $d_1$ (correctly or incorrectly). Therefore, player~2 is punishing player~1 by playing $D_2$, as they should.
\end{itemize}

If Option~1 takes place, then the best response of player~1 is to play $D_1$. 

If Option~2 occurs, then the best response is to ignore the mistaken signal and play $C_1$. 

If Option~3 took place, player~2 played $D_2$ as they should, and observed $(c_1, D_2)$, then they played $D_2$ and the best response is to play $D_1$.

Suppose player~1, during the first period of the game played $C_1$ and observed $d_2$. Giving an initial probability of 1 to player~2 conforming to the equilibrium, Bayesian updating yields that player~2 surely played $C_2$ and the signal observed is just a monitoring error (Option~2). The best response is to ignore this signal and play $C_1$ in the next period.

However, if player~1 played $D_1$ in the first period, and $C_1$ in the second, and observed $d_2$ in the second, then Bayesian updating gives that Option~3 is the likely one, hence the best response for player~1 is to play $D_1$ in the next period. 

This means that this one-recall strategy profile is not a sequential equilibrium. Indeed, computing the Bayesian probability of Option~3 requires more than one recall. While in sequential equilibrium the Bayesian nature of the updating of beliefs aggregates information as the play unfolds, it cannot be done with one-recall.
\end{document}